\newcommand{\quotes}[1]{``#1''}
\newcommand*\Let[2]{\State #1 $\gets$ #2}
\algnewcommand\algorithmicupon{\textbf{upon}}
\newcommand{\ALGtikzmarkcolor}{black}
\newcommand{\ALGtikzmarkextraindent}{4pt}
\newcommand{\ALGtikzmarkverticaloffsetstart}{-.5ex}
\newcommand{\ALGtikzmarkverticaloffsetend}{-.5ex}
\newcounter{ALG@tikzmark@tempcnta}
\newcommand\ALG@tikzmark@start{%
	\global\let\ALG@tikzmark@last\ALG@tikzmark@starttext%
	\expandafter\edef\csname ALG@tikzmark@\theALG@nested\endcsname{\theALG@tikzmark@tempcnta}%
	\tikzmark{ALG@tikzmark@start@\csname ALG@tikzmark@\theALG@nested\endcsname}%
	\addtocounter{ALG@tikzmark@tempcnta}{1}%
}
\def\ALG@tikzmark@starttext{start}
\newcommand\ALG@tikzmark@end{%
	\ifx\ALG@tikzmark@last\ALG@tikzmark@starttext
	\else
	\tikzmark{ALG@tikzmark@end@\csname ALG@tikzmark@\theALG@nested\endcsname}%
	\tikz[overlay,remember picture] \draw[\ALGtikzmarkcolor] let \p{S}=($(pic cs:ALG@tikzmark@start@\csname ALG@tikzmark@\theALG@nested\endcsname)+(\ALGtikzmarkextraindent,\ALGtikzmarkverticaloffsetstart)$), \p{E}=($(pic cs:ALG@tikzmark@end@\csname ALG@tikzmark@\theALG@nested\endcsname)+(\ALGtikzmarkextraindent,\ALGtikzmarkverticaloffsetend)$) in (\x{S},\y{S})--(\x{S},\y{E});%
	\fi
	\gdef\ALG@tikzmark@last{end}%
}
\apptocmd{\ALG@beginblock}{\ALG@tikzmark@start}{}{\errmessage{failed to patch}}
\pretocmd{\ALG@endblock}{\ALG@tikzmark@end}{}{\errmessage{failed to patch}}
\DeclarePairedDelimiter\ceil{\lceil}{\rceil}
\newcommand{\cmark}{\ding{51}}%
\newcommand{\xmark}{\ding{55}}%
\newtheorem{theorem}{Theorem}
\newtheorem{lemma}[theorem]{Lemma}
\newtheorem{corollary}[theorem]{Corollary}
\newtheorem{definition}[theorem]{Definition}
\newtheorem{remark}[theorem]{Remark}
\title{Reliable Broadcast despite Mobile Byzantine Faults \footnote{Part of the content of this paper has been published in \cite{DBLP:conf/opodis/opodis23}}}
\author[1]{Silvia Bonomi}
\author[1]{Giovanni Farina}
\author[2]{Sébastien Tixeuil}
\affil[1]{Sapienza University of Rome, Rome, Italy}
\affil[2]{Sorbonne Université, CNRS, LIP6, Institut Universitaire de France, Paris, France}
\begin{document}
\date{}
\sloppy

\maketitle

\begin{abstract}
    We investigate the solvability of the Byzantine Reliable Broadcast and Byzantine Broadcast Channel problems in distributed systems affected by Mobile Byzantine Faults. 
    We show that both problems are not solvable even in one of the most constrained system models for mobile Byzantine faults defined so far.
    By endowing processes with an additional local failure oracle, we provide a solution to the Byzantine Broadcast Channel problem.
\end{abstract}

\section{Introduction}
Byzantine Reliable Broadcast (BRB) is a fundamental primitive in fault-tolerant distributed systems ensuring that all correct processes eventually deliver the same message from a defined sender regardless of its  correctness.
Defined by Bracha~\cite{DBLP:journals/iandc/Bracha87} as a building block for a Byzantine-tolerant consensus protocol, BRB has been widely adopted and investigated since then, thanks to its ability to prevent arbitrarily (i.e., Byzantine) faulty processes from \emph{equivocating} by sending different messages to different processes.
It has been introduced as a \emph{one-shot} primitive that allows a pre-defined process in the system to spread a single message and generalized as a Byzantine Broadcast Channel (BBC) primitive \cite{DBLP:books/daglib/0025983} to allow every process to spread an arbitrary number of messages.
BRB has been used to construct several fault-tolerant distributed solutions, solving more complex problems such as register abstractions, consensus problems, and distributed ledgers. Thus, it has been analyzed in the literature from various perspectives, such as minimizing bandwidth consumption \cite{DBLP:conf/podc/AlhaddadDD0VXZ22}, or latency \cite{DBLP:journals/ppl/ImbsR16, DBLP:conf/opodis/Abraham0X21}.

A fundamental perspective to consider is the investigation of the feasibility of BRB and BBC when assuming no permanent failures.
In this paper, we are interested in analyzing BRB and BBC solvability considering a \textit{dynamic process failure model}, i.e., a model in which every process may potentially fail and recover, causing a potentially continuous change in a process's failure state throughout the system's lifetime.
Some examples of systems considering dynamic process failures are crash-recovery systems \cite{DBLP:conf/icdcs/RodriguesR00,DBLP:journals/jpdc/BoichatG05}, self-stabilizing systems \cite{DBLP:journals/cacm/Dijkstra74,DBLP:books/mit/Dolev2000}, and Mobile Byzantine tolerant systems \cite{DBLP:conf/wdag/Garay94,DBLP:journals/tcs/BonnetDNP16}.
In this work, we consider the \textit{Mobile Byzantine Failure} (MBF) model, in which all processes may alternate between periods of correct behavior and periods of arbitrary behavior (i.e., Byzantine). Indeed, the failure state of processes is governed by an external attacker capable of compromising and controlling a set of processes in the system, and such a set is dynamic. 
The MBF model captures some of the features of the most frequent attacks targeting distributed systems and related countermeasures, where the process's faults are primarily due to external malicious causes rather than internal misbehavior, and tools such as software rejuvenation techniques \cite{DBLP:conf/issre/KoutrasP20}, intrusion detection systems \cite{DBLP:journals/jnca/LiaoLLT13}, and trusted execution environments \cite{DBLP:conf/trustcom/SabtAB15} are available.

Despite several fundamental distributed problems have been analyzed in the literature considering the MBF model (i.e., Byzantine agreement \cite{DBLP:conf/wdag/Garay94,DBLP:journals/tcs/BonnetDNP16}, approximate Byzantine agreement \cite{DBLP:conf/sss/Tseng17, DBLP:conf/icdcs/BonomiPPT16}, and registers emulation \cite{DBLP:conf/icdcn/BonomiPP16}), to the best of our knowledge the BRB problem has never been studied so far in such settings.

Thus, our objective in this paper is the investigation of BRB and BBC in the presence of MBFs.
In particular, our contributions are:
\begin{enumerate}
    \item we formalize the \emph{Mobile Byzantine Reliable Broadcast} (MBRB) and \emph{Mobile Byzantine Broadcast Channel} (MBBC) as a natural extension of the BRB and BBC specifications to deal with MBFs. Indeed, the standard specifications for BRB and BBC primitives consider a \textit{static failure model}, where every process is either permanently correct or faulty;
    \item we prove several impossibility results, mainly showing that MBRB and MBBC cannot be implemented without additional knowledge provided by a powerful oracle reporting about processes' failure state; 
    \item we introduce such a powerful oracle and provide a protocol for solving MBBC in a synchronous round-based system; 
    \item we analyze a weaker MBBC specification that can be realized without the oracle.
\end{enumerate}

Let us note that being a natural extension of BRB and BBC primitives, the MBRB and MBBC primitives prevent faulty processes from equivocating, namely from sending different information to different processes, and can be used as building block for other fault-tolerant primitives. For example, MBRB/MBBC primitives can extend mobile Byzantine fault-tolerant register abstractions to support Byzantine clients \cite{DBLP:conf/icdcn/BonomiPP16}.
Our work not only offers an analysis of a specific problem but also provides several insights for other distributed system problems where the failure state of a process is dynamic and partially or entirely unknown.
We consider relatively strong assumptions in our system model, the same as those considered in related work, in order determine fundamental solvability conditions. Relaxation of most of these assumptions has already been partially investigated \cite{DBLP:conf/srds/BonomiPPT17}.

The rest of the paper is structured as follows. After reviewing related work on implementations of the BRB primitive and contributions considering mobile Byzantine failures in Section \ref{sec:related}, we formalize the system model in Section \ref{sec:sysmod}. We introduce the new specifications for the Mobile Byzantine Reliable Broadcast and the Mobile Byzantine Broadcast Channel problems in Section \ref{sec:mbrb}.
Section \ref{sec:imp} presents some impossibilities for the specifications we defined. 
To overcome some of the identified impossibilities and solve the Mobile Byzantine Broadcast Channel problem, we consider a powerful oracle, we propose a protocol in Section \ref{sec:prot}, and we analyze a weaker Mobile Byzantine Broadcast Channel specification that is realizable without any oracle in Section \ref{sec:weakprim}.

\section{Related Work}
\label{sec:related}
The Byzantine Reliable Broadcast (BRB) abstraction has been introduced by Bracha \cite{DBLP:journals/iandc/Bracha87} as a building block for a Byzantine-tolerant consensus protocol in a distributed system where at most $f$ processes are permanently arbitrary (Byzantine) faulty. Thanks to its ability to guarantee agreement among correct processes over the set of delivered messages, a BRB primitive has been used as a building block from several fault-tolerant solutions, and has been intensely investigated under several system and failure models, with the final aim of extending its power and optimizing different performance metrics.\\
Imbs and Raynal \cite{DBLP:journals/ppl/ImbsR16} proposed a protocol that improves latency (in terms of the number of rounds of message exchanges) compared to Bracha.
Guerraoui et al. \cite{DBLP:conf/wdag/GuerraouiKMPS19} relaxed the BRB specification, allowing each property to be violated with a fixed and arbitrarily small probability. 
Backes and Cachin \cite{DBLP:conf/dsn/BackesC03} and Raynal \cite{DBLP:journals/ppl/Raynal21} discussed extensions of the BRB problem; the former assuming both Byzantine faulty processes and fail-stop failures, the latter distinguishing between two different kinds of Byzantine behaviors, i.e. those attempting to prevent the liveness and those attempting to prevent the safety of the BRB.
Recently, Guerraoui et al. \cite{DBLP:conf/opodis/GuerraouiKKPST20} and Li et al. \cite{DBLP:journals/access/LiYWW22} extended BRB to distributed systems with dynamic membership: in any given view (i.e. set of participating processes, governed by the processes themselves), the set of Byzantine processes remains the same; however, two consecutive views allow for different sets of Byzantine processes.
By contrast, our work considers a static system membership (i.e., a fixed set of processes participating in the protocol) but a dynamic failure model, where Byzantine processes may change (that is, recover, and get Byzantine again) during the \emph{same} view.
To the best of our knowledge, all existing BRB protocols that assumed arbitrary process failures, except the aforementioned works by Guerraoui et al.~\cite{DBLP:conf/opodis/GuerraouiKKPST20} and Li et al.~\cite{DBLP:journals/access/LiYWW22}, considered a \textit{static failure model} i.e., they assumed that the set of Byzantine processes does not change.

Mobile Byzantine Failure (MBF) models have been introduced to capture various types of faults, such as external attacks, virus infections, or even arbitrary behaviors caused by software bugs, using a single model encompassing detection and rejuvenation capabilities.
In all these models, failures are abstracted by an omniscient adversary that can control up to $f$ mobile Byzantine agents. Every agent is located in a process and makes it Byzantine faulty until the omniscient adversary decides to move it to another process.
The main differences between existing MBF models are in the power of the omniscient adversary (i.e., when it can move the agents) and in the awareness that every process has about its failure state.
Most MBF models considered \textit{round-based computations} and can be classified according to Byzantine mobility constraints: under \emph{constrained mobility} \cite{DBLP:conf/ftcs/BuhrmanGH95} the adversary can move agents only when protocol messages are sent (similarly to how viruses would propagate), while under \emph{unconstrained mobility} \cite{banu2012improved,DBLP:journals/tcs/BonnetDNP16, DBLP:conf/wdag/Garay94,DBLP:conf/podc/OstrovskyY91, DBLP:conf/opodis/SasakiYKY13, DBLP:journals/iandc/Reischuk85} agents do not move with messages but rather during specific phases of the round.
More in detail, Reischuk \cite{DBLP:journals/iandc/Reischuk85} considered malicious agents stationary for a given period; Ostrovsky and Yung \cite{DBLP:conf/podc/OstrovskyY91} introduced the notion of mobile viruses and defined the adversary as an entity that can inject and distribute faults; finally, Garay \cite{DBLP:conf/wdag/Garay94}, Banu et al. \cite{banu2012improved}, Sasaki et al. \cite{DBLP:conf/opodis/SasakiYKY13}, and Bonnet et al. \cite{DBLP:journals/tcs/BonnetDNP16} considered that processes execute synchronous rounds and mobile agents can move from one process to another in a specific phase of the round, which subsequently affects each process's ability to adhere to the algorithm. 
As a result, the set of Byzantine faulty processes at any given moment is limited in size; however, its composition may change from one round to the next, and the impact of past compromises may linger if not properly addressed by the protocol.
The aforementioned works \cite{DBLP:conf/wdag/Garay94,banu2012improved,DBLP:conf/opodis/SasakiYKY13,DBLP:journals/tcs/BonnetDNP16} also differ due to the assumption about the knowledge that processes have about their previous infection.
In the Garay model \cite{DBLP:conf/wdag/Garay94}, a process can detect its infection after the agent leaves it. Conversely, Sasaki et al. \cite{DBLP:conf/opodis/SasakiYKY13} investigated a model where processes cannot detect when agents leave. Finally, Bonnet et al. \cite{DBLP:journals/tcs/BonnetDNP16} considered an intermediate setting where not faulty processes control the messages they send (in particular, they send the same message to all destinations, and they do not send spurious information). 
Bonomi et al. \cite{DBLP:conf/podc/BonomiPPT16, DBLP:conf/srds/BonomiPPT17} decoupled algorithm rounds from Mobile Byzantine agent movement (\textit{round-free model}). 
The problems analyzed under MBF models are Byzantine agreement \cite{DBLP:conf/wdag/Garay94,banu2012improved,DBLP:conf/opodis/SasakiYKY13,DBLP:journals/tcs/BonnetDNP16}, approximate Byzantine agreement \cite{DBLP:conf/sss/Tseng17, DBLP:conf/nca/SakavalasT18, DBLP:conf/icdcs/BonomiPPT16}, and Byzantine-tolerant registers \cite{DBLP:conf/podc/BonomiPPT16, DBLP:conf/icdcn/BonomiPP16, DBLP:conf/srds/BonomiPPT17}. To the best of our knowledge, no efforts have been made to investigate the BRB problem in the presence of MBFs. All existing works that assume MBFs rely on some kind of best-effort communication subsystem (i.e., no guarantees exist when a process is controlled by a Mobile Byzantine agent), potential equivocations and omissions introduced by faulty processes are directly addressed by the main investigated primitive (e.g., consensus, register). {The existence of a BRB primitive can simplify the definition of other mobile Byzantine fault-tolerant primitives, similar to the case of the static failure model~\cite{DBLP:journals/iandc/Bracha87}.}

\section{System Model}
\label{sec:sysmod}
We consider a distributed system composed of a set of $n$ processes $\Pi=\{p_1, p_2 \dots p_n\}$, each associated with a unique identifier. 

Processes communicate through message passing.
We assume that a process can communicate with any other process through a \textit{reliable}, \textit{authenticated}, \textit{point-to-point link} abstraction \cite{DBLP:books/daglib/0025983}. This means that messages sent over such channels cannot be altered, dropped, or duplicated, and the identity of the sender cannot be forged. 
A reliable authenticated point-to-point link abstraction exposes two operations: \emph{(i)} ${\sf P2P.send}(p_{\mathit{rcv}},m)$ which sends the message $m$ to the receiver process $p_{\mathit{rcv}}$, and \emph{(ii)} ${\sf P2P.deliver}(p_{\mathit{snd}},m)$ which notifies the reception of the message $m$ from a sender process $p_{\mathit{snd}}$.

We measure the time according to a fictional global clock $\mathbb{T}$ (not accessible to processes) spanning over the set of natural numbers $\mathbb{N}$. We refer to the starting time of the system as $t_0$, the $i$-th time instant since the beginning of the execution as $t_i$, and a period of time between time $t_b$ and $t_e$ as $T_{b,e} := [t_b, t_e) : t_b, t_e \in \mathbb{T}; t_b < t_e$.

Each process executes a distributed protocol $\mathcal{P}$ consisting of a set of local algorithms. Each algorithm in $\mathcal{P}$ is represented by a finite state automaton whose transitions correspond to computation and communication steps. A computation step denotes a computation executed locally by a given process, while a communication step denotes the sending or receiving of a message. Computation steps and communication steps are generally called \emph{events}.
Each process maintains a set of variables. This set and the current value of those variables denote the \emph{state} of a process. 

\begin{definition}[Local Execution History]
A local execution history is an alternating sequence $s_0,e_0,s_1,e_1,\ldots$ of states and events of a process $p_i$, such that state $s_{j+1}$ results from state $s_j$ by executing event $e_j$.
\end{definition}

\noindent We assume that the local algorithms composing $\mathcal{P}$ are stored in a tamper-proof read-only memory.

Processes may fail and we assume that they are affected by \emph{Mobile Byzantine Failures} (MBF). 
That is, we assume the existence of an omniscient adversary that controls up to $f>0$ mobile Byzantine agents and that can ``move'' such agents from one set of processes to another. 
When the adversary places a Byzantine agent on a process $p_i$, the agent takes control of $p_i$, letting it behave arbitrarily. For example, $p_i$ may omit to send/receive messages, alter the content of messages, alter its process state regardless of its local algorithm, and execute arbitrary code. However, we assume that the mobile Byzantine agents cannot compromise the code stored in the tamper-proof memory.
Thus, when the Byzantine agent leaves $p_i$, $p_i$ resumes executing its local algorithm correctly (albeit from a possibly corrupted state).
We assume that the adversary can move each mobile agent independently of the others. Still, any agent must remain on a process for a period of time lasting at least $\Delta_s \in \mathbb{Q}^{+}$ (rational positive numbers), i.e., once arrived, an agent compromises a node for at least $\Delta_s$ consecutive time units, and when $\Delta_s < 1$ we have that an agent can move multiple times in the same time unit. 
As an example, if $\Delta_s = 2$ we have that every mobile Byzantine agent must remain on the same process for at least $2$ consecutive time units, while $\Delta_s = \frac{1}{2}$ means that the agent may move {$\ceil*{\frac{1}{\Delta_s}} = 2$ times} in a time unit and compromise {$\ceil*{\frac{1}{\Delta_s}} =2$} different processes in the same time unit.\\
Let us note that, in the MBF model, no single process is guaranteed to remain correct forever and we may have processes that alternate between correct and incorrect behavior infinitely often.
This fundamental difference from the classical static Byzantine failure model commands to redefine the notion of correct and faulty processes (i.e., \textit{the process failure states}).

\begin{definition}[Faulty process]
A process $p_i$ is said to be \emph{faulty} at time $t_k$ if it is controlled by a mobile Byzantine agent at time $t_k$.
By extension, if at each time between $t_b$ and $t_e$, process $p_i$ is \emph{faulty}, then $p_i$ is faulty during the period $T_{b, e}$. 
\end{definition}

\noindent When a process $p_i$ is faulty, it may execute a protocol $\mathcal{P}'\neq \mathcal{P}$, and its local state may be altered arbitrarily.

\noindent We denote by $B(t)$ the set of faulty processes at time $t$ and by $B(T_{b,e})$ the set of faulty processes during the whole period $T_{b,e}$ (i.e., $B(T_{b,e}) = \bigcap_i B(t_i)$ for $b \le i < e$). 

\begin{definition}[Correct process]
A process $p_i$ is \emph{correct} when it is not faulty, that is, $p_i$ is correct at time $t_k$ if it is not controlled by a Byzantine agent at time $t_k$. Similarly, a process $p_i$ is correct in the period $T_{b, e}$ if it remains correct between times $t_b$ and $t_e$.
\end{definition}

\noindent Let us remark that when a process $p_i$ is correct, it executes $\mathcal{P}$ but potentially it may start its execution from a compromised state (due to a previous corruption performed by a mobile Byzantine agent). 
\noindent We denote by $C(t_k)$ the set of correct processes at time $t_k$ and by $C(T_{b,e})$ the set of correct processes throughout the period $T_{b,e}$ (that is, $C(T_{b,e}) = \bigcap_i C(t_i)$ for $b \le i < e$). 

Note that, due to the mobility of Byzantine agents, every process may potentially alternate between correct and faulty states infinitely often. 
To this aim, we also introduce the notion of \emph{infinitely often correct processes}:

\begin{definition}[$\Delta_c$-Infinitely often correct process]
Let $\Delta_c \in \mathbb{N}^{+}$.
A process $p_i$ is $\Delta_c$-\emph{infinitely often correct} if, for every time $t_j$, there exists a following period $T_{b,e}$ lasting at least $\Delta_c$ where $p_i$ is correct. Formally: $\forall t_j \in \mathbb{T}, ~ \exists t_b, t_e $ such that $t_b>t_j, ~t_e - t_b \geq \Delta_c, ~p_i \in C(T_{b,e})$.
\end{definition}

\noindent Informally, the notion of $\Delta_c$-infinitely often correct process captures the possibility that a process is not permanently faulty, but correct for at least $\Delta_c$ units of time after mobile Byzantine agents have left it.

\noindent In the following, we will consider several alternative settings for our system model:

\begin{itemize}
    \item \textit{\bf system timing assumptions}: we consider either a \textit{synchronous} (\texttt{SYNC}) or an \textit{asynchronous} (\texttt{ASYNC}) system. When considering a synchronous system, we assume that there is an upper bound on the time required to perform local computation on the processes and an upper bound on the time required by a message to be delivered via a P2P link, both of them known by all processes. 
    In addition, we assume that the computation evolves in sequential synchronous rounds $r_1, r_2, \dots ,r_j, \dots$. Every round $r_j$ is divided into three phases: \emph{(i)} \emph{send} where processes transmit messages to their intended receivers, \emph{(ii)} \emph{receive} where processes collect messages sent during the send phase of the current round, and \emph{(iii)} \emph{compute} where processes process received messages, and prepare those that need to be sent in the following round.
    Contrarily, in an asynchronous setting, we are not assuming any upper bound, and the computation progresses as soon as an event is generated by a process.
    
    \item \textit{\bf mobile Byzantine agent synchronization}: we consider three different types of mobility with different degrees of synchronization between mobile Byzantine agents. In particular, we will consider movement that are either \textit{synchronized} (\texttt{S-MOB$^{+}$}), \textit{synchronous} (\texttt{S-MOB}), or \textit{asynchronous} (\texttt{A-MOB}) that abstract MBF models existing in the literature. 
    In the \texttt{A-MOB} model, mobile Byzantine agents move independently and once the movement occurs, the agent remains at the destination node for at least $\Delta_s$, with $\Delta_s$ unknown to the processes (see ITU model in \cite{DBLP:conf/podc/BonomiPPT16}). 
    In the \texttt{S-MOB} model, mobile Byzantine agents move independently, and, also in this case, once the movement happens the agent remains on the destination node for at least $\Delta_s$. Unlike the previous case, $\Delta_s$ is known to the processes (see the ITB model in \cite{DBLP:conf/podc/BonomiPPT16}).
    The \texttt{S-MOB$^{+}$} model is a particular case of the \texttt{S-MOB} model specific for synchronous systems where the computation evolves in synchronous rounds. Indeed, in this case $\Delta_s$ is expressed in terms of round, and mobile Byzantine agents can move only between two consecutive rounds, i.e. after the computation phase of a round $r_i$ and before the send phase of round $r_{i+1}$\footnote{The agents' movements are thus synchronized with the synchronous rounds.}(see Garay's MBF model \cite{DBLP:conf/wdag/Garay94}). 
    Let us stress that in the \texttt{S-MOB$^{+}$} setting 
    every process is either faulty or correct for an entire round. Therefore, for ease of presentation, we say that a process is \emph{faulty or correct in the round $r_k$} in the \texttt{S-MOB$^{+}$} systems and extend the notation of $C(t)$ and $B(t)$ accordingly, that is, with $C(r_k)$ and $B(r_k)$, respectively, referring to the sets of correct and faulty processes in the round $r_k$. Furthermore, we measure the time with the number of rounds.
    
    \item \textit{\bf failure awareness}: we assume that every process $p_i$ is either \textit{aware} or \textit{unaware} about a mobile Byzantine agent moving away from $p_i$. 
    We abstract this knowledge by introducing two different local oracles that reveal information to process $p_i$. 
    Specifically, we consider: \textit{basic failure awareness} ($\mathcal{O}_{\mathit{BFA}}$) and \textit{full failure awareness} ($\mathcal{O}_{\mathit{FFA}}$).
    In the $\mathcal{O}_{\mathit{BFA}}$ case, a process $p_i$ knows when (i.e., in which time unit) a mobile agent moves away from $p_i$; 
    in the $\mathcal{O}_{\mathit{FFA}}$ case, a processes $p_i$ additionally know when the agent arrived to $p_i$ (i.e., $p_i$ know the entire period $T_{b, e}$ in which it was faulty). 
    \end{itemize}

\noindent More formally:
\begin{definition}[\textit{Basic Failure Awareness Oracle} $\mathcal{O}_{\mathit{BFA}}$]
    If a mobile Byzantine agent leaves from a process $p_i$ at time $t_j$, then the failure awareness oracle $\mathcal{O}_{\mathit{BFA}}$ generates a \textsc{cured}$()$ event on $p_i$ at time $t_{j+1}$.
\end{definition}

\noindent Observe that $\mathcal{O}_{\mathit{BFA}}$ informs $p_i$ as soon as $p_i$ becomes free from mobile Byzantine agents, and thus allows $p_i$ to take corrective actions (\emph{e.g.} to avoid spreading compromised information). However, $\mathcal{O}_{\mathit{BFA}}$ does not provide any information about the length of the period $p_i$ was faulty.

\begin{definition}[\textit{Full Failure Awareness Oracle} $\mathcal{O}_{\mathit{FFA}}$]
   If a mobile Byzantine agent takes control of a process $p_i$ at time $t_j$ and leaves $p_i$ at time $t_k$, then the full failure awareness oracle $\mathcal{O}_{\mathit{FFA}}$ generates a \textsc{cured}$()$ event on $p_i$ at time $t_{k+1}$, and returns the time label $t_j$ when invoking operation \textsc{faulty\_at}$()$.
\end{definition}

\noindent For the sake of notation, we refer to setting where no oracle is available as $\mathcal{O}_{\mathit{NFA}}$. Let us remark that both $\mathcal{O}_{\mathit{BFA}}$ and $\mathcal{O}_{\mathit{FFA}}$ are \emph{local} oracles, i.e., they provide information to the actual process where the events occurred; thus, a process $p_i$ is not aware of the failure state of any other process $p_j$.

Note that the assumptions considered in our system model are equivalent to or less constrained than those in other works dealing with mobile Byzantine agents ~\cite{DBLP:conf/wdag/Garay94,banu2012improved,DBLP:conf/opodis/SasakiYKY13,DBLP:journals/tcs/BonnetDNP16}. The only exceptions are the $\mathcal{O}_{\mathit{FFA}}$ oracle and the notion of $\Delta_c$-\emph{infinitely often correct} process, which have not been considered before.

In the remainder of the paper, we will characterize the specific setting considered in terms of system timing assumptions, agent synchronization, and failure awareness by specifying a triple $\langle \alpha, \beta, \gamma \rangle$ where $\alpha \in \{\texttt{SYNC}, \texttt{ASYNC}\}$, $\beta \in \{\texttt{A-MOB}, \texttt{S-MOB}, \texttt{S-MOB$^+$}\}$ and $\gamma \in \{\mathcal{O}_{\mathit{BFA}}, \mathcal{O}_{\mathit{FFA}}, \mathcal{O}_{\mathit{NFA}}\}$. With slight abuse of notation, we will use $"*"$ in a triple when the specific dimension is not relevant to prove our claims.

\section{Mobile BRB and BBC Specification}
\label{sec:mbrb}
\noindent Informally \textit{Byzantine Reliable Broadcast} (BRB) \cite{DBLP:journals/iandc/Bracha87,DBLP:books/daglib/0025983} is a communication primitive that enables all processes of a distributed system to agree on the delivery of a \underline{single} message disseminated by a pre-defined process called the \textit{source}, while the \textit{Byzantine Broadcast Channel} (BBC) \cite{DBLP:books/daglib/0025983} primitive extends BRB allowing all processes to disseminate \underline{an arbitrary number} of messages so that all correct processes eventually deliver the same set of messages \footnote{The formal specification of BRB and BBC primitives are provided in the Appendix \ref{sec:brb}.}.

Let us note that in the original BRB and BBC specifications the source is either always correct or always faulty in a given execution. Conversely, in our settings, it is possible that the source of a message changes its failure state multiple times (even during a single broadcast instance) making the original specification no more suitable. Thus, we extend the BRB and BBC, by formalizing the \textit{Mobile Byzantine Reliable Broadcast} (MBRB) and the \textit{Mobile Byzantine Broadcast Channel} (MBBC) problems to capture challenges imposed by mobile Byzantine faults.
We aim to specify two communication primitives accessible by every process and exposing the \textsc{{MBRB/MBBC.Broadcast($m$)}} and \textsc{MBRB/MBBC.Deliver($s$,$m$)} operations, where $m$ is a message and $s$ is a process identifier.
We say that a process $p_i$ \quotes{MBRB/MBBC-broadcasts a message $m$} when it executes \textsc{MBRB/MBBC.Broadcast($m$)}, and $p_i$ \quotes{MBRB/MBBC-delivers a message $m$ from $p_s$} when $p_i$ generates the \textsc{MBRB/MBBC.Deliver($s,m$)} event.
Similarly to other communication primitives, the \textsc{MBRB/MBBC-broadcast} operation is triggered to disseminate a message, while \textsc{MBRB/MBBC-deliver} notifies message deliveries.
We associate two additional parameters to both primitives, $\Delta_b \in \mathbb{N}^{+}$ and $\Delta_c \in \mathbb{N}^{+}$, characterizing the length of two periods (detailed in the specifications' properties).
We use the character \quotes{*} in our specifications when the actual value of the reference parameter is irrelevant.
\smallskip

\noindent Informally, a \textsc{MBRB}($\Delta_b,\Delta_c$) communication primitive guarantees that, given a source process $p_s$ and a message $m$ generated by $p_s$ while it is correct (for at least $\Delta_b$ time units), $m$ is reliably delivered by any $\Delta_c$-infinitely often correct process $p_j$ in a period where $p_j$ is correct. Similarly to BRB, this primitive is specified by considering an instance for every message generated by the identified source.
More formally, a \textsc{MBRB}($\Delta_b,\Delta_c$) communication primitive must guarantee the following properties:
\begin{itemize}
    \item \textit{$\left(\Delta_b,\Delta_c\right)$-Validity}: If there exists a period $T_{i,j}$ lasting at least $\Delta_b$ where a process $p_s$ is correct in $T_{i,j}$ and executes \textsc{MBRB.Broadcast}($m$), then at least one $\Delta_c$-infinitely often correct process $p_d$ eventually executes \textsc{MBRB.Deliver}($s$,$m$) while correct. 
    \item \textit{No duplication}: Every process $p_d$ executes \textsc{MBRB.Deliver}($s$,$*$) at most once when correct, namely $p_d$ MBRB-delivers at most one message from $p_s$ among all times $t_k \in \mathbb{T}$ such that $p_d \in C(T_{k, k+1})$.
    \item \textit{$\Delta_b$-Integrity}: If a process $p_d$ is correct at time $t_k$ and executes \textsc{MBRB.Deliver}($s$,$m$),
    then either $p_s$ was correct in $T_{i,j} = [t_i,t_{i+\Delta_b})$, with $t_i \leq t_k$, and executed \textsc{MBRB.Broadcast}($m$) at time $t_i$, or $p_s$ was faulty at some $t_i \leq t_k$.
    \item \textit{Consistency}: If some process is correct at time $t_k$ and executes \textsc{MBRB.Deliver}($s,m$), and another process is correct at time $t_l$ and executes \textsc{MBRB.Deliver}($s,m'$), then $m = m'$.
    \item \textit{$\Delta_c$-Totality}: If some process is correct at time $t_k$ and executes \textsc{MBRB.Deliver}($s,*$), then every $\Delta_c$-infinitely often correct process eventually executes \textsc{MBRB.Deliver}($s,*$).
\end{itemize}
~\\

The MBBC communication primitive is the natural extension of the BBC and its specification extends the one of the MBRB. In particular, the MBBC primitive 
guarantees that multiple messages generated by a source process (while it is correct for at least $\Delta_b$ consecutive time units) will be eventually delivered by any process $p_j$ that is $\Delta_c$-infinitely often correct in a period in which $p_j$ is correct.
More formally, a \textsc{MBBC}($\Delta_b$,$\Delta_c$) communication primitive must guarantee the following properties:
\begin{itemize}
    \item \textit{$\left(\Delta_b,\Delta_c\right)$-Validity}: If there exists a period $T_{i,j}$ lasting at least $\Delta_b$ where a process $p_s$ is correct in $T_{i,j}$ and executes \textsc{MBRB.Broadcast}($m$), then at least one $\Delta_c$-infinitely often correct process $p_d$ eventually executes \textsc{MBRB.Deliver}($s$,$m$) while correct. \\
    
    \item \textit{No duplication}: Every process $p_d$ executes \textsc{MBBC.Deliver}($s$,$m$), with message $m$ and source $s$, at most once when correct, namely, it MBBC-delivers a message $m$ from $p_s$ at most once among all times $t_k$ such that $p_d \in C(T_{k,k+1})$.\\

    \item \textit{$\Delta_b$-Integrity}: If a process $p_d$ is correct at time $t_k$ and executes \textsc{MBRB.Deliver}($s$,$m$),
    then either $p_s$ was correct in $T_{i,j} = [t_i,t_{i+\Delta_b})$, with $t_i \leq t_k$, and executed \textsc{MBRB.Broadcast}($m$) at time $t_i$, or $p_s$ was faulty at some $t_i \leq t_k$. \\

    \item \textit{$\Delta_c$-Agreement}: If some process is correct at time $t_k$ and executes \textsc{MBRB.Deliver}($s,m$), then every $\Delta_c$-infinitely often correct process eventually executes \textsc{MBRB.Deliver}($s,m$).
\end{itemize}

Note that the specifications rule the \textsc{MBRB/MBBC.Deliver}($s,m$) operations in times when processes \underline{are correct}. 
Operations executed when a process is faulty cannot be controlled and thus are not relevant to the specification.
Furthermore, note that when a process is controlled by a mobile Byzantine agent, it may execute arbitrary code and alter its local memory. Such a process has no information about what occurred when compromised (except the fact of being previously compromised in case an oracle is available).
This makes the implementation of the presented communication primitives particularly challenging and
will lead to proving several impossibility results that are specific to mobile Byzantine faults in the following sections.

\section{Impossibility Results}
\label{sec:imp}
This section presents several impossibility results for the MBRB and MBBC problems. 
In particular, Theorems \ref{th:impAsy} and \ref{th:impAsy2} prove the impossibility of solving both MBRB and MBBC if the system is asynchronous, or if the agents' movements are asynchronous. 
Then, assuming a synchronous system and synchronized agents,
Theorems \ref{th:mbrbimp1} and \ref{th:mbrbimp2} state the impossibility of solving MBRB with the strongest failure oracle we considered, $\mathcal{O}_{\mathit{FFA}}$, and the impossibility of solving MBBC with the weaker failure oracle, $\mathcal{O}_{\mathit{BFA}}$.
These latter impossibilities arise from the fact that a correct process cannot infer other processes' failure state from their behavior. Thus, they cannot distinguish messages that must be delivered from those that can be safely dropped.
Table \ref{tab:summaryImpossibilities} provides an overview of the impossibilities proved in this Section based on the specific considered settings.

    \begin{theorem}
        \label{th:impAsy}
        There exists no protocol $\mathcal{P}$ implementing the Mobile Byzantine Reliable Broadcast (resp. Mobile Byzantine Broadcast Channel) in $\langle \texttt{ASYNC}, \texttt{S-MOB}, \mathcal{O}_{\mathit{FFA}}\rangle$.
    \end{theorem}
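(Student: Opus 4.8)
The plan is to argue by contradiction: assume a protocol $\mathcal{P}$ solves MBRB (resp.\ MBBC) in $\langle \texttt{ASYNC}, \texttt{S-MOB}, \mathcal{O}_{\mathit{FFA}}\rangle$, and exhibit two executions that are indistinguishable to one correct process yet force it to violate Consistency (resp.\ $\Delta_c$-Agreement). I would fix a reference execution $E_1$ in which the source $p_s$ is correct throughout a window of length at least $\Delta_b$ and MBRB-broadcasts $m$. By $(\Delta_b,\Delta_c)$-Validity, some $\Delta_c$-infinitely often correct process $p_d$ must eventually execute \textsc{Deliver}$(s,m)$ while correct; since that decision is taken after a \emph{finite} prefix of $p_d$'s local execution history, there is a finite multiset $M_d$ of point-to-point messages whose receipt, in some order and at some local times, triggers this delivery.

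Next I would build a conflicting execution $E_2$ that reproduces exactly this finite local view at $p_d$, while globally realizing an incompatible situation: a second correct process $p_{d'}$ that must deliver a different value $m'\neq m$ (for MBRB), or a $\Delta_c$-infinitely often correct $p_{d'}$ that is never supplied enough support to deliver $m$ (for MBBC). In $E_2$ the adversary makes $p_s$ faulty at some early instant, which satisfies the disjunct of $\Delta_b$-Integrity and keeps the run admissible. The two features of the model are exactly what make $E_2$ feasible. \emph{Asynchrony} lets the adversary stretch real time arbitrarily and schedule or withhold messages freely, so that $p_d$ receives precisely $M_d$ in precisely the triggering order of $E_1$, while traffic crossing to $p_{d'}$ is delayed. \emph{Mobility} lets the adversary \emph{forge} the contents of $M_d$: although at most $f$ processes are faulty at any instant, by moving the $f$ agents sequentially, each staying the mandatory $\Delta_s$, every process can be made faulty during some disjoint interval, and while faulty it emits toward $p_d$ exactly the protocol message $\mathcal{P}$ would have had it send in $E_1$. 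These forged messages are buffered by the asynchronous links and delivered to $p_d$ later, so that $p_d$ collects support from \emph{all} senders even though no more than $f$ were ever simultaneously corrupted; symmetrically, forged support for $m'$ is routed to $p_{d'}$.

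Crucially, throughout $E_2$ the adversary keeps $p_d$ itself correct, so $\mathcal{O}_{\mathit{FFA}}$, being a \emph{local} oracle, reports no faulty interval at $p_d$ and returns nothing that could distinguish $E_2$ from $E_1$. Hence $p_d$'s entire observable history (messages together with oracle outputs) is identical in the two runs, and $p_d$ executes \textsc{Deliver}$(s,m)$ in $E_2$ as well. The construction is uniform across the two problems: for MBRB the engineered $p_{d'}$ delivering $m'$ contradicts Consistency, whereas for MBBC the $p_{d'}$ that never accumulates support contradicts $\Delta_c$-Agreement. I expect the main obstacle to be the bookkeeping of the forgery step: one must verify that the sequential corruption schedule respects the $\Delta_s$ dwell constraint and the \texttt{S-MOB} semantics while still producing, under some admissible asynchronous delivery schedule, the exact multiset $M_d$ with the exact local arrival order to which $\mathcal{P}$ reacts, i.e.\ that the adversary can reproduce $p_d$'s triggering prefix message-by-message rather than merely approximately.
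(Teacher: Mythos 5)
Your strategy is genuinely different from the paper's: you attack \emph{Consistency} (resp.\ $\Delta_c$-Agreement) via an indistinguishability-plus-forgery argument, whereas the paper attacks \emph{$(\Delta_b,\Delta_c)$-Validity} directly. The paper's adversary is much lighter: since $p_s$ must push $m$ over the point-to-point links at least once, and asynchrony lets the in-flight period of those messages be stretched arbitrarily, a \emph{single} agent can visit all recipients in sequence (respecting the $\Delta_s$ dwell time at each), arriving at each $p_i$ exactly when $m$ is P2P-delivered there and discarding it, so that no process other than $p_s$ ever retains $m$ and Validity fails outright. No forgery, no second execution, and no reasoning about the oracle is needed.

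The genuine gap in your construction is the final step, where you need $p_{d'}$ either to deliver $m'\neq m$ (MBRB) or to be ``never supplied enough support to deliver $m$'' (MBBC); neither is established. For MBRB, a faulty source does not \emph{oblige} anyone to deliver $m'$: to force $p_{d'}$ you would need $E_2$ to be simultaneously indistinguishable at $p_{d'}$ from a third execution in which $p_s$ correctly broadcasts $m'$ \emph{and} in which $p_{d'}$ happens to be the process that Validity (which only guarantees \emph{some} deliverer) actually designates --- an extra construction you only gesture at, and one that breaks down if that deliverer coincides with $p_d$. For MBBC the problem is sharper: you keep $p_d$ correct forever, so after delivering $m$ it keeps executing $\mathcal{P}$, and whatever it sends to $p_{d'}$ is eventually delivered over the \emph{reliable} links; the adversary can delay but not suppress that traffic, so a protocol in which delivery triggers persistent relaying could make $p_{d'}$ deliver $m$ and satisfy Agreement after all. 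To starve $p_{d'}$ you would have to corrupt it at the instant each such relay arrives --- which is precisely the paper's suppression adversary, at which point the forgery scaffolding is unnecessary and you may as well apply suppression to the original broadcast and contradict Validity directly.
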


   \begin{proof}
    \let\clearpage\relax 
%
In order to prove our claim we first show that it is impossible for any protocol $\mathcal{P}$ solving MBRB to generate an execution satisfying both \textit{$\left(\Delta_b,\Delta_c\right)$-Validity} and \textit{$\Delta_c$-Totality}. 
        Then we extend our arguments to prove the claim also for MBBC, where it is impossible to satisfy both \textit{$\left(\Delta_b,\Delta_c\right)$-Validity} and \textit{$\Delta_c$-Agreement}.
    
        Let us consider a process $p_s$ that is correct at a certain time $t_{bcast}$, that triggers \textsc{MBRB.broadcast($m$)} at time $t_{bcast}$, and that remains correct for a period $\Delta_{src} \geq \Delta_b$ after $t_{bcast}$.
        If $\mathcal{P}$ exists, it needs to guarantee \textit{$\left(\Delta_b,\Delta_c\right)$-Validity} for the message $m$. 
        As a consequence, if there exists a $\Delta_c$-infinitely often correct process $p_{dest}$, $\mathcal{P}$ must guarantee that eventually a \textsc{MBRB.deliver($s,m$)} event is generated from $p_{dest}$. 
        To guarantee both \textit{$\left(\Delta_b,\Delta_c\right)$-Validity} and \textit{$\Delta_c$-Totality} $p_{dest}$ must be different from $p_s$. It is therefore necessary that in $\mathcal{P}$ $p_{src}$ sends the message $m$ through the reliable authenticated links at least once to allow a $\Delta_c$-infinitely often correct process $p_{dest}$ to become aware of the message.
        
        Let us remark that under the \texttt{ASYNC} timing assumptions, there not exists any upper bound on the time required to exchange a message over a P2P link. 
        In particular, given a message $m$ sent by a process $p_i$ to a process $p_j$ at a certain time $t$ using a reliable authenticated perfect point-to-point link, we can only guarantee that $m$ will be delivered to $p_j$ at some time $t' > t$ but it is not possible to estimate its latency $d = t' - t$ (i.e., the time needed to deliver $m$).
        As a consequence, it is easily to identify a scenario where a single mobile Byzantine agent moves $n$ times in $T_{t, t'}$ and corrupts in sequence every processes $p_i$ in the system right after the message is P2P-delivered on $p_i$, discarding the message when received (mimic the loss of the message) and thus preventing a process $p_{dest} \neq p_s$ from delivering $m$.
    
        The reasoning can be extended considering many processes $p_i$ and for a MBBC instance with respect the properties \textit{$\left(\Delta_b,\Delta_c\right)$-Validity} and \textit{$\Delta_c$-Agreement}, and the claim follows.
    \end{proof}

Let us note that Theorem \ref{th:impAsy} holds assuming the most constrained agent's mobility model available in an asynchronous system (i.e., \texttt{S-MOB}) and the most powerful failure oracle ($\mathcal{O}_{\mathit{FFA}}$) considered. It follows that the MBRB and MBBC problems cannot be solved in \texttt{ASYNC} assuming a less constrained environment, as stated in the following Corollary.

    \begin{corollary}
        \label{cor:2}
        There exists no protocol $\mathcal{P}$ implementing the Mobile Byzantine Reliable Broadcast (resp. Mobile Byzantine Broadcast Channel) in 
        $\langle\texttt{ASYNC}, M, O\rangle$, 
        with $M\in\left\{\texttt{A-MOB},\texttt{S-MOB}\right\}$ 
        and $O\in\left\{\mathcal{O}_{\mathit{FFA}},\mathcal{O}_{\mathit{BFA}}\right\}$.
    \end{corollary}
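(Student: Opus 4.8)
The plan is to derive the result from Theorem~\ref{th:impAsy} by a monotonicity (domination) argument, rather than by re-running any adversarial construction. The key observation is that, among the four combinations listed in the statement, the triple $\langle \texttt{ASYNC}, \texttt{S-MOB}, \mathcal{O}_{\mathit{FFA}}\rangle$ already excluded by Theorem~\ref{th:impAsy} is the \emph{most favorable} one for a protocol designer: it exposes the most information about failures ($\mathcal{O}_{\mathit{FFA}}$ reveals both the \textsc{cured}$()$ event and the label returned by \textsc{faulty\_at}$()$, whereas $\mathcal{O}_{\mathit{BFA}}$ reveals only \textsc{cured}$()$), and it grants processes knowledge of the stay duration $\Delta_s$ (\texttt{S-MOB}), which \texttt{A-MOB} withholds. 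Since the timing dimension is fixed to \texttt{ASYNC} throughout the corollary, only the mobility and oracle dimensions vary, and in both of them $\langle \texttt{ASYNC}, \texttt{S-MOB}, \mathcal{O}_{\mathit{FFA}}\rangle$ dominates every other listed setting.

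First I would make this domination precise through two trivial reductions. For the oracle dimension: any protocol written for $\mathcal{O}_{\mathit{BFA}}$ is also a valid protocol in a system equipped with $\mathcal{O}_{\mathit{FFA}}$, obtained by simply discarding the value returned by \textsc{faulty\_at}$()$ and reacting only to \textsc{cured}$()$ events; by their definitions the \textsc{cured}$()$ events produced by the two oracles coincide (both fire at the time step following the agent's departure), so the protocol behaves identically and still solves the problem. For the mobility dimension: the constraints imposed on the adversary are identical under \texttt{A-MOB} and \texttt{S-MOB} (each agent moves independently and remains on a process for at least $\Delta_s$); the two settings differ only in whether processes are told $\Delta_s$. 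Hence any protocol that solves the problem under \texttt{A-MOB}, i.e.\ without relying on knowledge of $\Delta_s$, remains correct verbatim under \texttt{S-MOB}, where it may simply ignore that extra knowledge. In both reductions the set of executions the adversary can produce is unchanged, so every property of MBRB (resp.\ MBBC) that holds in the weaker setting continues to hold in the stronger one.

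Combining the two reductions, a hypothetical protocol $\mathcal{P}$ solving MBRB (resp.\ MBBC) in any triple $\langle \texttt{ASYNC}, M, O\rangle$ with $M\in\{\texttt{A-MOB},\texttt{S-MOB}\}$ and $O\in\{\mathcal{O}_{\mathit{FFA}},\mathcal{O}_{\mathit{BFA}}\}$ would, after discarding the unused oracle information and/or the unused knowledge of $\Delta_s$, yield a protocol solving the same problem in $\langle \texttt{ASYNC}, \texttt{S-MOB}, \mathcal{O}_{\mathit{FFA}}\rangle$. This contradicts Theorem~\ref{th:impAsy}, and the claim follows for both MBRB and MBBC simultaneously, since the theorem establishes impossibility for both.

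The step I expect to require the most care is the justification of the mobility reduction: I must argue that \texttt{A-MOB} and \texttt{S-MOB} subject the adversary to exactly the same $\Delta_s$-constraint, so that the only genuine difference is epistemic (knowledge of $\Delta_s$), and therefore that hiding $\Delta_s$ cannot shrink the adversary's power nor invalidate a correct protocol. One should also check that no property in the MBRB/MBBC specifications is phrased in terms of $\Delta_s$ in a way that would make the parameter meaningful to the \emph{specification} itself rather than merely to the protocol; since the properties are stated only in terms of $\Delta_b$ and $\Delta_c$, this concern does not arise and the reduction goes through cleanly.
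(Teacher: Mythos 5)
Your proof is correct and follows the same route as the paper: the paper likewise disposes of the corollary by observing that $\langle \texttt{ASYNC}, \texttt{S-MOB}, \mathcal{O}_{\mathit{FFA}}\rangle$ is the strongest of the listed settings for the parameters $M$ and $O$, so the impossibility of Theorem~\ref{th:impAsy} carries over. Your write-up merely makes the domination reductions explicit, which the paper leaves implicit.
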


    \begin{proof}
        \let\clearpage\relax 
%
The claim follows from the same argument provided as in Theorem \ref{th:impAsy}, given that the $\langle \texttt{ASYNC}, \texttt{S-MOB}, \mathcal{O}_{\mathit{FFA}}\rangle$ setting is the strongest possible for the parameters $M$ and $O$.
    \end{proof}

    \begin{theorem}
        \label{th:impAsy2}
        There exists no protocol $\mathcal{P}$ implementing the Mobile Byzantine Reliable Broadcast (resp. Mobile Byzantine Broadcast Channel) in $\langle \texttt{SYNC}, \texttt{A-MOB}, \mathcal{O}_{\mathit{FFA}} \rangle$.
    \end{theorem}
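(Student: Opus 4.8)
The plan is to reuse the Validity-versus-Totality tension exploited in the proof of Theorem \ref{th:impAsy}, but to replace the lever used there (unbounded message latency) with a lever provided by the unbounded mobility of a single agent. Under \texttt{SYNC} the latency of every P2P link is bounded and the computation proceeds in rounds, so the ``message chase'' across the network used in Theorem \ref{th:impAsy} is no longer available. Instead, I would exploit the defining feature of \texttt{A-MOB}: the minimal sojourn time $\Delta_s$ is unknown to the processes, so any candidate protocol $\mathcal{P}$ must be correct for \emph{every} $\Delta_s>0$, in particular for arbitrarily small $\Delta_s$. For such a $\Delta_s$ a single agent may move $\lceil 1/\Delta_s\rceil$ times per time unit, and hence visit all $n$ processes within the bounded but strictly positive real-time duration of a single round. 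This recreates, in the time dimension, the ``the agent outruns the message'' scenario of Theorem \ref{th:impAsy}.

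First I would fix an execution in which $p_s$ is correct throughout a period of length $\Delta_b$ and invokes \textsc{MBRB.Broadcast}($m$), and in which at least one process $p_{dest}\neq p_s$ is $\Delta_c$-infinitely often correct. Exactly as argued in Theorem \ref{th:impAsy}, \emph{$(\Delta_b,\Delta_c)$-Validity} together with \emph{$\Delta_c$-Totality} then force $m$ to be eventually delivered by a process distinct from $p_s$, which in turn forces $\mathcal{P}$ to communicate $m$ over the links and to let some non-source process act upon it. The core of the proof is to show that a lone agent can prevent any non-source process from ever retaining $m$ long enough to do so.

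The key construction is a per-round sweep. Within each round, after the receive phase every recipient of (any message carrying) $m$ holds it in its local state throughout a processing window of some fixed positive real-time length $\tau$; choosing $\Delta_s<\tau/n$, I would have the single agent traverse all $n$ processes during this shared window, overwriting on each visit the visited process's state so as to erase every trace of $m$ before the process emits the corresponding forward/echo/deliver step. A cured process resumes from this erased state and thus sends nothing about $m$ in the following round; since $p_s$ re-emits $m$ each round while correct, the agent simply repeats the sweep, suppressing every propagation of $m$ away from $p_s$ indefinitely. The source's $\Delta_b$-window is never touched, so the premise of \emph{Validity} keeps holding, and $\mathcal{O}_{\mathit{FFA}}$ is of no help here: a swept process learns only that it was faulty during the window, not that it had received $m$, so it has nothing to recover. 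Consequently no process other than $p_s$ ever delivers $m$, contradicting \emph{$\Delta_c$-Totality}; transcribing the argument with \emph{$\Delta_c$-Agreement} in place of Totality yields the claim for MBBC.

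The step I expect to be the main obstacle is making this ``shared vulnerable window'' rigorous for an arbitrary $\mathcal{P}$. Concretely, I must argue that the model genuinely separates reception from the irrevocable act of forwarding or delivering, so that $m$ necessarily resides in the state of every non-source recipient for a strictly positive real-time interval, and that these intervals overlap enough across recipients for a single agent to sweep them in sequence and beat each recipient's action, even for protocols that react to $m$ as early as possible or that route $m$ only to a subset of processes. I would address this by invoking the \texttt{SYNC} round structure, in which the compute phase (where forwarding and delivery decisions are taken) follows the receive phase and has bounded positive duration, and by noting that withholding $m$ from a process can only remove it from any later supporting quorum, so restricting the broadcast to a subset makes suppression strictly easier rather than harder.
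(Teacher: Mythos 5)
Your proposal is correct and follows essentially the same route as the paper: the paper's proof also reduces to Theorem \ref{th:impAsy} by observing that, since $\Delta_s$ is unknown in \texttt{A-MOB}, the adversary may pick $\Delta_s$ small enough that a single agent compromises all $n$ processes within the bounded communication window, recreating the same suppression of $m$ and hence the same Validity-versus-Totality (resp.\ Agreement) contradiction. Your elaboration of the ``shared vulnerable window'' merely makes explicit a step the paper leaves implicit.
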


 \begin{proof}
        \let\clearpage\relax 
%
    The proof follow from Theorem \ref{th:impAsy} by observing that the same misbehavior occurs both in the $\langle \texttt{SYNC}, \texttt{A-MOB}, \mathcal{O}_{\mathit{FFA}} \rangle$ and $\langle \texttt{ASYNC}, \texttt{S-MOB}, \mathcal{O}_{\mathit{FFA}}\rangle$ settings. Indeed, if the latency of the communication $\delta$ is bounded due to the \texttt{SYNC} model, it is always possible to find a value for $\Delta_s < 1$ such that in every period $T_{t, t+\delta}$ a single mobile Byzantine agent can compromise $n$ different processes. Considering that in \texttt{A-MOB} processes do not know the value of $\Delta_s$ they cannot leverage on it in the protocol $\mathcal{P}$.
    \end{proof} 

\begin{theorem}
\label{th:mbrbimp1}
If $\Delta_b \in \mathbb{N}^{+}$ and $\Delta_b \geq 2$ rounds, then
there exists no protocol $\mathcal{P}$ implementing a Mobile Byzantine Reliable Broadcast primitive in $\langle$\texttt{SYNC}, \texttt{S-MOB}$^{+}$, $\mathcal{O}_{\mathit{FFA}}\rangle$.
\end{theorem}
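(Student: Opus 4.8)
The plan is to assume, toward a contradiction, that a protocol $\mathcal{P}$ solves MBRB in $\langle$\texttt{SYNC}, \texttt{S-MOB}$^{+}$, $\mathcal{O}_{\mathit{FFA}}\rangle$ with $\Delta_b \geq 2$, and then to exhibit an execution in which two correct processes MBRB-deliver distinct messages from the same source, violating \emph{Consistency}. First I would observe that \emph{$\Delta_b$-Integrity} can be rendered vacuous by the adversary: since its second disjunct holds as soon as $p_s$ is faulty at some $t_i \le t_k$, a single early corruption of the source frees the adversary from Integrity, leaving \emph{$(\Delta_b,\Delta_c)$-Validity}, \emph{$\Delta_c$-Totality}, \emph{No duplication}, and \emph{Consistency} as the only binding constraints. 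I would then fix two distinct messages $m \neq m'$ and two processes $p_a, p_b$ that are $\Delta_c$-infinitely often correct and correct during the relevant rounds, and introduce two reference executions $E_m$ and $E_{m'}$ in which $p_s$ is correct throughout a $\Delta_b$-round window and genuinely MBRB-broadcasts $m$ (resp.\ $m'$) to \emph{every} process. By \emph{$(\Delta_b,\Delta_c)$-Validity} some correct process delivers $m$ (resp.\ $m'$), and by \emph{$\Delta_c$-Totality} every $\Delta_c$-infinitely often correct process, in particular $p_a$ and $p_b$, eventually delivers $m$ (resp.\ $m'$) while correct.

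The core of the argument is a third, \emph{confusing} execution $E^\star$ that exploits the two-round validity window. Here the adversary keeps $p_s$ genuinely correct for the first round $r_1$ of the window — so that $p_s$ sends one and the same first-round message to every process, exactly as in $E_m$ and $E_{m'}$ — and then, using the inter-round agent movement permitted by \texttt{S-MOB}$^{+}$, corrupts $p_s$ for the second round $r_2$. In $r_2$ the faulty source \emph{equivocates}: it sends to $p_a$ precisely the second-round messages it would emit in $E_m$, and to $p_b$ precisely those it would emit in $E_{m'}$. The genuinely correct round $r_1$ provides \quotes{validity cover}: restricted to its exchanges with $p_s$, the two-round interaction seen by $p_a$ is \emph{identical} to that of the valid execution $E_m$, and symmetrically for $p_b$ and $E_{m'}$. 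Provided these local views can be made to coincide in full (the point addressed next), $p_a$ must act as in $E_m$ and MBRB-deliver $m$ while correct, whereas $p_b$ must MBRB-deliver $m'$ while correct; since $m \neq m'$, this contradicts \emph{Consistency}.

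Making the full local views coincide is the delicate point, and the step I expect to be the main obstacle: in a relay-based protocol the second-round content of $p_s$ is re-broadcast by the other processes, and these relays would otherwise expose the equivocation one round later, so I must also pin down the \emph{relayed} messages reaching $p_a$ and $p_b$, not only those received directly from $p_s$. This is exactly where the hypothesis $\Delta_b \geq 2$ is essential. The two-round window lets the adversary split the source's behaviour into a genuinely correct round that fabricates an indistinguishable \quotes{valid-broadcast} prefix and a faulty round that injects the conflicting content, while the inter-round mobility of the agents — which in \texttt{S-MOB}$^{+}$ relocate precisely between rounds — is then used to corrupt the relaying processes in time to suppress the conflicting echoes before they reach $p_a$ or $p_b$, keeping both views pinned to their respective reference executions (a use of mobility to erase evidence mirroring the technique of Theorem~\ref{th:impAsy}). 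I would finally argue that no analogous construction exists for $\Delta_b = 1$: a single-round validity window is either entirely correct, in which case a correct source necessarily sends the same message to all and leaves no room to equivocate within the window, or faulty, in which case \emph{Validity} imposes no delivery obligation and the protocol may safely withhold delivery until relays confirm consistency. It is precisely the combination of one correct \quotes{cover} round with one faulty equivocating round, available only when $\Delta_b \geq 2$, that makes a valid broadcast and an equivocating one indistinguishable to each target and thereby forces the contradiction.
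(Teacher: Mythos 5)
Your proposal targets the wrong property, and the attack it describes does not defeat all protocols. You aim to violate \emph{Consistency} by letting the source be correct in round $r_1$, equivocate while faulty in round $r_2$, and then using agent mobility to suppress the conflicting relays. Two things break. First, the theorem must hold for \emph{every} $n$, but with only $f$ mobile agents the adversary can silence a bounded number of relaying processes per round; for $n$ large enough a Bracha-style quorum mechanism survives the suppression, and the paper's own protocol $\mathcal{P}_{MBBC-RB}$ (Section~\ref{sec:prot}, Lemma~\ref{lm:sufficiency}) shows that quorum-based agreement on delivered messages is achievable in the very same setting $\langle\texttt{SYNC}, \texttt{S-MOB}^{+}, \mathcal{O}_{\mathit{FFA}}\rangle$ for $n>5f$. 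So equivocation-plus-relay-suppression cannot be the source of the impossibility. Second, the ``validity cover'' is not airtight even at the source: the round-$r_1$ messages of a correct source may depend on the payload it is broadcasting (you cannot assume the protocol's first round is payload-independent), so $p_b$'s view in $E^{\star}$ --- round $r_1$ as in $E_m$, round $r_2$ as in $E_{m'}$ --- need not coincide with its view in $E_{m'}$. Note also that in $E^{\star}$ the source is correct for only one round $<\Delta_b$, so \emph{Validity} imposes no obligation there and your whole argument must rest on an indistinguishability claim that does not actually hold.

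The paper's proof exploits a different, and unavoidable, tension: between \emph{Validity} and \emph{No duplication}. It builds two histories of the source $p_s$ that are indistinguishable to an always-correct observer $p_1$: in $\mathcal{H}'_s$, $p_s$ broadcasts $m'$ while correct (for $\Delta_b$ rounds) and later broadcasts $m''$ while permanently faulty; in $\mathcal{H}''_s$ the failure states are swapped, so $m''$ is the broadcast performed while correct. Because $\mathcal{O}_{\mathit{FFA}}$ is local and a faulty $p_s$ can forge any report about its own failure history, $p_1$ cannot tell which of the two broadcasts ``counts''. \emph{No duplication} allows $p_1$ to deliver at most one message from $p_s$ ever, so whichever it delivers, \emph{Validity} is violated in one of the two histories. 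That argument is independent of $n$ and of any relay or quorum structure, which is why it succeeds where an equivocation-based attack cannot. To repair your proof you would have to abandon the Consistency route and argue about which of two broadcasts (one issued while correct, one while faulty) the receivers are obliged to deliver.
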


\begin{proof}
For the sake of contradiction, let us assume that such a protocol $\mathcal{P}$ exists.
Let us consider the local execution history $\mathcal{H}'_s$ of a process $p_s$ that is \textit{correct} for $\Delta_b \geq 2$ rounds and executes \textsc{{MBRB.Broadcast($m'$)}} in round $r_1$. Subsequently, $p_s$ remains correct for the successive $\Delta_1$ rounds, it gets permanently \textit{faulty} from round $r_{\Delta_b+\Delta_1+1}$ (namely $\forall r_j \in [r_{\Delta_b+\Delta_1+1}, \infty), ~ p_s \in B(r_j)$), and it executes \textsc{{MBRB.Broadcast($m''$)}} in round $r_{\Delta_b+\Delta_1+1}$.
We remark that the failure state of any process may change unexpectedly due to the movement of a Byzantine agent.
Let us consider another local execution history $\mathcal{H}''_s$ of process $p_s$ where the failure state of $p_s$ evolves in the opposite way from $\mathcal{H}'_s$, that is process $p_s$ is \textit{faulty} in rounds $r_j \in [r_1, r_{\Delta_b+\Delta_1}]$ and executes \textsc{{MBRB.Broadcast($m'$)}} in round $r_1$; 
subsequently, $p_s$ is permanently \textit{correct} from round $r_{\Delta_b+\Delta_1+1}$ (namely $\forall r_j \in [r_{\Delta_b+\Delta_1+1}, \infty), p_s \in C(r_j)$) and executes \textsc{{MBRB.Broadcast($m''$)}} in round $r_{\Delta_b+\Delta_1+1}$.
Notice that in both histories $p_s$ executes the \textsc{MBRB.Broadcast} operation only once while correct.
We provide a graphical representation of the two histories in Figure \ref{fig0}.
Let us consider a process $p_1 \neq p_s$ that is correct for the entire lifetime of the system (i.e. $\forall r_j, ~ p_1 \in C(r_j)$), thus $p_1$ is also an $\Delta_c$-infinitely often correct process for any value of $\Delta_c \in \mathbb{N}$.
The two execution histories $\mathcal{H}'_s$ and $\mathcal{H}''_s$ are indistinguishable to $p_1$ because the same operations and events occurred on $p_s$.
Process $p_1$ is not aware of the failure state of $p_s$ (i.e. it has no access to the failure oracle on $p_s$). 
Even defining an algorithm $\mathcal{A}$ that allows process $p_s$ to share the information obtained from $\mathcal{O}_{\mathit{FFA}}$ with process $p_1$ through the point-to-point primitive, process $p_1$ cannot distinguish an execution of $\mathcal{A}$ where $p_s$ is correct and reveals a previous faulty state, from another where $p_s$ is faulty, and maliciously reports the same information.

According to the \textit{Validity} property of the MBRB specification, process $p_1$ executing $\mathcal{P}$ must MBRB-deliver a message from $p_s$ considering both histories because process $p_s$ MBRB-broadcasts a message when correct.
If $\mathcal{P}$ makes process $p_1$ eventually MBRB-deliver message $m'$, then the \textit{Validity} property is violated in $\mathcal{H}''_s$, because process $p_1$ never MBRB-delivers $m''$ (according to the \textit{No-duplication} property) that is broadcast when $p_s$ is correct.
If $\mathcal{P}$ makes process $p_1$ eventually MBRB-deliver message $m''$, then the \textit{Validity} property is violated in $\mathcal{H}'_s$ for the same reason.
This is a contradiction and the claim follows regardless of the value of $\Delta_b$ and $\Delta_c$.
\end{proof}

\begin{figure}
     \centering
     \begin{subfigure}[b]{0.49\textwidth}
         \centering
         \includegraphics[width=\textwidth]{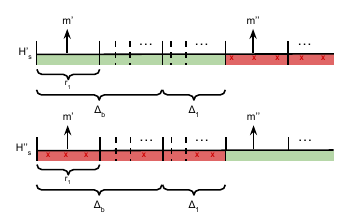}
         \caption{Graphical representations for Theorem \ref{th:mbrbimp1}}
        \label{fig0}
     \end{subfigure}
     \hfill
     \begin{subfigure}[b]{0.49\textwidth}
         \centering
         \includegraphics[width=\textwidth]{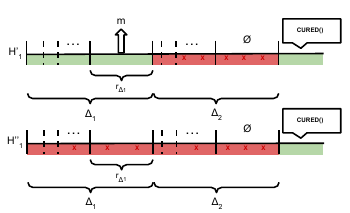}
         \caption{Graphical representations for Theorem \ref{th:mbrbimp2}.}
         \label{fig1}
     \end{subfigure}
    \caption{Graphical representations for Theorems' proof.}
    \label{figX}
\end{figure}

\noindent Theorem \ref{th:mbrbimp1} states the impossibility in solving MBRB assuming the most constrained assumptions we considered. Corollary \ref{cor:1} extends the result to less constrained settings.

\begin{corollary}
    \label{cor:1} If $\Delta_b \in \mathbb{N}^{+}$ and $\Delta_b \geq 2$ rounds, then
there exists no protocol $\mathcal{P}$ implementing a Mobile Byzantine Reliable Broadcast primitive in $\langle$\texttt{SYNC}, \texttt{S-MOB}$^{+},\mathcal{O}_{\mathit{BFA}}
\rangle$ or in $\langle\texttt{SYNC}, \texttt{S-MOB}, * \rangle$.
\end{corollary}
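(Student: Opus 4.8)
The plan is to derive both impossibilities from Theorem~\ref{th:mbrbimp1} by a monotonicity (reduction) argument, observing that $\langle \texttt{SYNC}, \texttt{S-MOB}^{+}, \mathcal{O}_{\mathit{FFA}} \rangle$ is the most favorable of all the settings named in the statement: it grants the protocol the most informative failure oracle and grants the adversary the least mobility freedom. Any relaxation along either dimension — substituting a weaker oracle, or allowing a less constrained mobility model — can only make the problem harder, so the impossibility must persist. Concretely, I would make the two underlying orderings explicit and then transport the contradiction of Theorem~\ref{th:mbrbimp1} across each of them.

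For the oracle dimension, I would note that $\mathcal{O}_{\mathit{FFA}}$ dominates $\mathcal{O}_{\mathit{BFA}}$: every \textsc{cured}$()$ event produced by $\mathcal{O}_{\mathit{BFA}}$ is also produced by $\mathcal{O}_{\mathit{FFA}}$, the latter merely offering the additional \textsc{faulty\_at}$()$ information. Hence a protocol $\mathcal{P}$ designed for $\mathcal{O}_{\mathit{BFA}}$ can be run verbatim under $\mathcal{O}_{\mathit{FFA}}$ (simply discarding the extra information), so a solution in $\langle \texttt{SYNC}, \texttt{S-MOB}^{+}, \mathcal{O}_{\mathit{BFA}} \rangle$ would yield a solution in $\langle \texttt{SYNC}, \texttt{S-MOB}^{+}, \mathcal{O}_{\mathit{FFA}} \rangle$, contradicting Theorem~\ref{th:mbrbimp1}. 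The same reasoning transfers impossibility to $\mathcal{O}_{\mathit{NFA}}$, since a protocol that uses no oracle at all is trivially a protocol under $\mathcal{O}_{\mathit{FFA}}$. This settles the first claim.

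For the mobility dimension, I would observe that \texttt{S-MOB}$^{+}$ is a special case of \texttt{S-MOB}: every adversary schedule admissible under \texttt{S-MOB}$^{+}$ (agents moving only between consecutive rounds) is also admissible under \texttt{S-MOB}. In particular, the two round-aligned execution histories $\mathcal{H}'_s$ and $\mathcal{H}''_s$ constructed in the proof of Theorem~\ref{th:mbrbimp1} are realizable by an \texttt{S-MOB} adversary; they remain indistinguishable to the permanently correct process $p_1$, and they force exactly the same clash between \textit{Validity} and \textit{No-duplication}. Thus the impossibility already holds in $\langle \texttt{SYNC}, \texttt{S-MOB}, \mathcal{O}_{\mathit{FFA}} \rangle$, and applying the oracle domination argument once more extends it to $\mathcal{O}_{\mathit{BFA}}$ and $\mathcal{O}_{\mathit{NFA}}$ — i.e.\ to every value of the third parameter, which is precisely $\langle \texttt{SYNC}, \texttt{S-MOB}, * \rangle$.

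The argument is essentially a bookkeeping reduction, so I do not expect any genuinely hard step; the one point needing care is verifying that the specific histories of Theorem~\ref{th:mbrbimp1} remain valid executions after relaxing the mobility constraint and after substituting a weaker oracle — that is, that neither relaxation hands the correct process $p_1$ a new way to separate $\mathcal{H}'_s$ from $\mathcal{H}''_s$. Since $p_1$ has no access to $p_s$'s local oracle in any of these settings and the events observable on $p_s$ are unchanged, indistinguishability is preserved and the corollary follows.
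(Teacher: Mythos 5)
Your proposal is correct and matches the paper's argument: the paper's own proof of Corollary~\ref{cor:1} is a one-sentence appeal to exactly this monotonicity reasoning, namely that each setting in the statement is weaker than $\langle \texttt{SYNC}, \texttt{S-MOB}^{+}, \mathcal{O}_{\mathit{FFA}} \rangle$ either in the oracle or in the mobility constraint, so the indistinguishability argument of Theorem~\ref{th:mbrbimp1} carries over unchanged. You simply spell out the reduction (oracle domination and schedule inclusion) that the paper leaves implicit.
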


    \begin{proof}
        \let\clearpage\relax 
%
        The claim follows from the same argument provided in Theorem \ref{th:mbrbimp1} give that the considered settings assume either a less constrained agent mobility model (\texttt{S-MOB}) or a failure oracle providing less knowledge ($\mathcal{O}_{\mathit{BFA}}$).
    \end{proof}

\begin{theorem}
    \label{th:mbrbimp2}
    If $\Delta_b \in \mathbb{N}^{+}$ and $\Delta_b \geq 2$ rounds, then
    there exists no protocol $\mathcal{P}$ implementing a Mobile Byzantine Reliable Channel primitive in $\langle \texttt{SYNC}, \texttt{S-MOB}^{+}, \mathcal{O}_{BFA}\rangle$.
\end{theorem}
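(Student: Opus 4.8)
The plan is to argue by contradiction, assuming a protocol $\mathcal{P}$ solves the Mobile Byzantine Broadcast Channel in $\langle \texttt{SYNC}, \texttt{S-MOB}^{+}, \mathcal{O}_{\mathit{BFA}}\rangle$ for some fixed $\Delta_b \geq 2$. The first thing to notice is that the source-centric construction of Theorem~\ref{th:mbrbimp1} cannot be reused verbatim: under MBBC a correct process may MBBC-deliver arbitrarily many messages from the same source, so \emph{No duplication} no longer forces a single choice between two candidate messages, and one may simply deliver both. Moreover, the second disjunct of $\Delta_b$-\emph{Integrity} (``$p_s$ was faulty at some $t_i \leq t_k$'') becomes vacuously true as soon as the source has ever been corrupted before the delivery time, so any history in which $p_s$ is faulty at some point trivially satisfies Integrity. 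Consequently, a genuine contradiction can only be extracted from an execution in which the source stays \emph{correct throughout} (up to the relevant delivery time), so that Integrity strictly forbids MBBC-delivering any message the source never MBBC-broadcast while correct for $\Delta_b$ rounds.

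Concretely, I would fix a process $p_2$ that is correct in every round (hence $\Delta_c$-infinitely often correct for every $\Delta_c$) and construct two executions $E_1$ and $E_2$ indistinguishable to $p_2$. In $E_1$ the source $p_s$ MBBC-broadcasts a message $m$ and remains correct for at least $\Delta_b$ rounds, so by $(\Delta_b,\Delta_c)$-\emph{Validity} some $\Delta_c$-infinitely often correct process MBBC-delivers $m$, and by $\Delta_c$-\emph{Agreement} $p_2$ must MBBC-deliver $m$ as well. In $E_2$ the source is correct in every round up to $p_2$'s delivery time and never MBBC-broadcasts $m$, so $\Delta_b$-\emph{Integrity} forbids any correct process from MBBC-delivering $m$ from $p_s$. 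Since the two executions are indistinguishable to $p_2$, it behaves identically in both, MBBC-delivering $m$ in $E_2$ and violating Integrity; this is the contradiction, and the construction is the one sketched in Figure~\ref{fig1}.

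The technical heart is building $E_2$ so that, at $p_2$, it reproduces exactly the evidence on which $\mathcal{P}$ must rely in $E_1$, even though the source is silent and never faulty. Because authenticated links prevent a relay from impersonating $p_s$, a correct protocol cannot deliver $m$ on the strength of too few relay witnesses: otherwise a set of at most $f$ faulty relays could frame a correct, silent source and break Integrity. Hence $\mathcal{P}$ must gather a quorum of witnesses exceeding the fault bound before delivering, and I would let a mobile Byzantine agent roam over the relaying processes during the $\Delta_b \geq 2$ rounds available, having each visited process emit toward $p_2$ precisely the witness messages it would have emitted while supporting $m$ in $E_1$; the slack of two or more rounds is what lets the at-most-$f$ agents cover, across consecutive rounds, enough relays to reconstitute the full quorum. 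The decisive point is that each forging relay is corrupted only briefly and is later \textsc{cured}: with only $\mathcal{O}_{\mathit{BFA}}$ it learns \emph{when} an agent left but not \emph{when} it arrived, so from its post-corruption state it cannot separate genuine witness records from forged ones, and hence cannot retract or refrain from re-asserting the spurious support. This is exactly the information $\mathcal{O}_{\mathit{FFA}}$ supplies---knowing the whole faulty interval lets a cured process discard precisely the actions taken while corrupted---which is why MBBC becomes solvable under $\mathcal{O}_{\mathit{FFA}}$ (Section~\ref{sec:prot}) but not under $\mathcal{O}_{\mathit{BFA}}$.

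The main obstacle I anticipate is making the indistinguishability of $E_1$ and $E_2$ rigorous against an \emph{arbitrary} $\mathcal{P}$. One must schedule the agents so that at most $f$ processes are faulty in any round and each stays for at least $\Delta_s$, match the authenticated direct messages from $p_s$ to $p_2$ in both executions (exploiting that $\mathcal{P}$ cannot deliver on a bare direct source message, so the adversary need only equalise the \emph{deliverable} evidence reaching $p_2$), and verify that every correct relay in $E_2$, equipped only with \textsc{cured} events, has the same local execution history as in $E_1$. I would formalize this by giving the adversary's schedule round-by-round over the $\Delta_b$-window, exhibiting the forged witness set, and checking the view of each $\mathcal{O}_{\mathit{BFA}}$-limited process; the impossibility for the channel primitive in $\langle \texttt{SYNC}, \texttt{S-MOB}^{+}, \mathcal{O}_{\mathit{BFA}}\rangle$ then follows.
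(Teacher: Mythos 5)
Your proposal takes a wrong turn. The paper's proof does not attack \emph{Integrity} at a permanently correct observer; it attacks the \emph{receiver}. It fixes a permanently correct source $p_s$ that MBBC-broadcasts $m$, and compares two local histories of a single receiving process $p_1$: in $\mathcal{H}'_1$, $p_1$ is correct, MBBC-delivers $m$, is then corrupted, has its state wiped, and is cured; in $\mathcal{H}''_1$, $p_1$ is faulty for that entire prefix, has its state wiped, and is cured at the same round. Because $\mathcal{O}_{\mathit{BFA}}$ reports only \emph{when} the agent left (not when it arrived), $p_1$'s local state and oracle output at the cure round are identical in both histories, so $p_1$ must act the same way: if it delivers $m$ it violates \emph{No duplication} in $\mathcal{H}'_1$ (it already delivered $m$ while correct), and if it does not, it violates \emph{Validity}/\emph{Agreement} in $\mathcal{H}''_1$. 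The contradiction is precisely the one you set aside at the outset when you observed that No duplication ``no longer forces a single choice'': for MBBC it still forbids delivering the \emph{same} pair $(s,m)$ twice while correct, and that is the property the proof leans on.

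Your alternative --- forging a witness quorum with roaming agents so that a permanently correct $p_2$ delivers a message a permanently correct source never broadcast --- cannot be made to work against an arbitrary protocol. First, a protocol may bind its witness messages to the round index relative to the broadcast round (as $\mathcal{P}_{MBBC-RB}$ does with its \textsf{ECHO}/\textsf{READY} quorums), so in the single round where the quorum must be assembled the adversary controls at most $O(f)$ senders, which is insufficient for large $n$; the $\Delta_b \geq 2$ ``slack'' does not let agents accumulate witnesses across rounds if stale witnesses are ignored. Second, and decisively, Theorem~\ref{th:weakbfa} exhibits a protocol in $\langle$\texttt{SYNC, S-MOB$^{+}$, $\mathcal{O}_{\mathit{BFA}}$}$\rangle$ (for $n>5f$) satisfying Validity, Integrity, and Agreement, dropping \emph{only} No duplication. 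Hence no adversarial schedule can force an Integrity violation against every protocol in this setting, and any correct impossibility proof for full MBBC under $\mathcal{O}_{\mathit{BFA}}$ must derive its contradiction from No duplication, as the paper's does.
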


\begin{proof}
For the sake of contradiction, let us assume that such a protocol $\mathcal{P}$ exists.
Let us assume a permanently correct process $p_s$ (i.e. $\forall r_j, ~ p_s \in C(r_j)$) that executes \textsc{{MBBC.Broadcast($m$)}} in rounds $r_1$.
Let us consider the local execution history $\mathcal{H}'_1$ of a process $p_1$ that is \textit{correct} in rounds $r_j \in [r_1, r_{\Delta_1}], ~ \Delta_1 \in \mathbb{N}$, and executes \textsc{{MBBC.Deliver($m$)}} in round $r_{\Delta_1}$;
subsequently, $p_1$ gets \textit{faulty} for $\Delta_2$ consecutive rounds, $\Delta_2 \in \mathbb{N}$, it wipes its local state (i.e. initialises all the process variables) in round $r_{\Delta_1+\Delta_2}$, and it gets \textit{permanently correct} from round $r_{\Delta_1+\Delta_2+1}$ (namely $\forall r_i \in [r_{\Delta_1+\Delta_2+1}, \infty), ~p_1 \in C(r_i)$).\\
Let us consider another local execution history $\mathcal{H}''_1$ of process $p_1$ that is \textit{faulty} in rounds $r_j \in [r_1, r_{\Delta_1+\Delta_2}]$ and it wipes its local state in round $r_{\Delta_1+\Delta_2}$;
subsequently, $p_1$ gets \textit{permanently correct} from round $r_{\Delta_1+\Delta_2+1}$ (namely $\forall r_j \in [r_{\Delta_1+\Delta_2+1}, \infty), ~p_1 \in C(r_j)$).
We provide a graphical representation in Figure \ref{fig1}.
In round $r_{\Delta_1+\Delta_2+1}$, process $p_1$ has the same local state in both histories and the $\mathcal{O}_{\mathit{BFA}}$ oracle generates the same \textsc{cured}$()$ event on process $p_1$.
Process $p_1$ does not know what happened during the previous rounds.
It is even defining an algorithm $\mathcal{A}$ that allows any process $p_i$ to share and retrieve the state and events occurred on the process through the point-to-point primitive: process $p_i$ can execute such a protocol either as correct or as faulty, and the two executions would be indistinguishable by any other process.

According to the \textit{Validity} property of the MBBC specification, process $p_1$ executing $\mathcal{P}$ must MBBC-deliver message $m$ from $p_s$ in both histories. In round $r_{\Delta_1+\Delta_2+1}$ process $p_1$ has the same local state on both histories, thus it can act in one only way, specifically it can command or not process $p_i$ to deliver message $m$ from $p_s$. In the positive case, the protocol violates the \textit{No duplication} property in history $\mathcal{H}'_1$, in the negative case the \textit{Validity} property is violated by the protocol in $\mathcal{H}''_1$. This leads to a contradiction and the claim follows regardless to the value of $\Delta_1$,$\Delta_2$, and $\Delta_c$.
\end{proof}




{\footnotesize

\begin{table}
    \resizebox{0.85\textwidth}{!}{
    \begin{subtable}[c]{0.49\textwidth}
    \begin{tabular}{c|c|cc|}
    \cline{2-4}
        & ASYNC                & \multicolumn{2}{c|}{SYNC}       \\ \hline
    \multicolumn{1}{|c|}{\multirow{3}{*}{S-MOB$^{+}$}} & \multirow{3}{*}{} & \multicolumn{1}{c|}{$\mathcal{O}_{\mathit{BFA}}$} & $\mathcal{O}_{\mathit{FFA}}$ \\ \cline{3-4} 
    \multicolumn{1}{|c|}{}                             &                      & \multicolumn{1}{c|}{\xmark} & \xmark  \\
    \multicolumn{1}{|c|}{}                             &                 & \multicolumn{1}{c|}{(Cor. \ref{cor:1})} & (Th. \ref{th:mbrbimp1})   \\ \hline
    \multicolumn{1}{|c|}{\multirow{3}{*}{S-MOB}}       & \multirow{2}{*}{\xmark}   & \multicolumn{1}{c|}{$\mathcal{O}_{\mathit{BFA}}$}   & $\mathcal{O}_{\mathit{FFA}}$    \\ \cline{3-4} 
    \multicolumn{1}{|c|}{}                             &                      & \multicolumn{1}{c|}{\xmark} & \xmark  \\
    \multicolumn{1}{|c|}{}                             & (Cor. \ref{cor:2})                  & \multicolumn{1}{c|}{(Cor. \ref{cor:1})} & (Cor. \ref{cor:1})  \\ \hline
\multicolumn{1}{|c|}{\multirow{2}{*}{A-MOB}}       & \xmark                    & \multicolumn{2}{c|}{\xmark}          \\
\multicolumn{1}{|c|}{}                             & (Cor. \ref{cor:2})                  & \multicolumn{2}{c|}{(Th. \ref{th:impAsy2})}         \\ \hline
    \end{tabular}
    \subcaption{MBRB}
    \end{subtable}
    \hspace{30pt}
    \begin{subtable}[c]{0.49\textwidth}
    \begin{tabular}{c|c|cc|}
\cline{2-4}
    & ASYNC                & \multicolumn{2}{c|}{SYNC}       \\ \hline
    \multicolumn{1}{|c|}{\multirow{3}{*}{S-MOB$^{+}$}} & \multirow{3}{*}{} & \multicolumn{1}{c|}{$\mathcal{O}_{\mathit{BFA}}$} & $\mathcal{O}_{\mathit{FFA}}$ \\ \cline{3-4} 
    \multicolumn{1}{|c|}{}                             &                      & \multicolumn{1}{c|}{{\color{purple}\xmark (\textbf{* Sec \ref{sec:weakprim}})}} & {\color{purple}\cmark}  \\
    \multicolumn{1}{|c|}{}                             &                    & \multicolumn{1}{c|}{(Th. \ref{th:mbrbimp2})} & {\color{purple}\textbf{(Th. \ref{th:main}})}  \\ \hline
    \multicolumn{1}{|c|}{\multirow{3}{*}{S-MOB}}       & \multirow{2}{*}{\xmark}   & \multicolumn{1}{c|}{$\mathcal{O}_{\mathit{BFA}}$}   & $\mathcal{O}_{\mathit{FFA}}$    \\ \cline{3-4} 
    \multicolumn{1}{|c|}{}                             &                      & \multicolumn{1}{c|}{} &   \\
    \multicolumn{1}{|c|}{}                             & (Cor. \ref{cor:2})                  & \multicolumn{1}{c|}{\xmark} & \textbf{?}  \\ \hline
    \multicolumn{1}{|c|}{\multirow{2}{*}{A-MOB}}       & \xmark                    & \multicolumn{2}{c|}{\xmark}          \\
    \multicolumn{1}{|c|}{}                             & (Cor. \ref{cor:2})                  & \multicolumn{2}{c|}{(Th. \ref{th:impAsy2})}         \\ \hline
\end{tabular}
    \subcaption{MBBC}
    \end{subtable}
    }
\caption{Summary of the solvability results.}\label{tab:summaryImpossibilities}
\end{table}
}

\noindent {\bf Discussion.} {Contrarily to what we could expect, the MBRB and MBBC problems are impossible to solve in settings (\emph{e.g.}, $\langle$\texttt{SYNC}, \texttt{S-MOB}$^{+}$, $\mathcal{O}_{\mathit{NFA/BFA}}\rangle$) where the register abstraction and consensus problems are solvable~\cite{DBLP:conf/wdag/Garay94,banu2012improved,DBLP:conf/opodis/SasakiYKY13,DBLP:journals/tcs/BonnetDNP16, DBLP:conf/podc/BonomiPPT16, DBLP:conf/icdcn/BonomiPP16, DBLP:conf/srds/BonomiPPT17}. 
{
The intuition behind this is that other problems addressed under the MBF model have a semantics that do not require to execute a particular operation (the delivery of a message in our case) at most once and depending on a precedent failure state of the process. Indeed, both the register abstractions and consensus set constraints on a local value stored by the processes (respectively, the shared value and the decided value) but no primitive operation is associated with their update in their specification. 
Contrarily, MBRB and MBBC introduce constraints on the deliveries of messages that depend on the actual and previous failure states of the processes, generating thus symmetry conditions that are impossible to break without violating one of the properties characterizing the specification. In particular, the main challenge is to ensure that a single broadcast instance does not generate multiple deliveries to the same process while it is correct.
}
%
%
Another counter-intuitive result is that considering a setting stronger than the one considered in related works (\emph{e.g.}, $\langle$\texttt{SYNC}, \texttt{S-MOB}$^{+}$, $\mathcal{O}_{\mathit{FFA}}\rangle$), the MBRB problem is impossible to solve while the MBBC one is possible (see Section~\ref{sec:prot}). 
In the static Byzantine failure model (where every process is always either correct or faulty in a given execution), the channel specification extends the broadcast one allowing multiple broadcast from the same source. As a matter of fact, in the mobile Byzantine failure model such an extension is less constrained with respect to the broadcast: in MBRB, every process can execute only one broadcast operation for the entire lifetime of the system, whereas MBBC allows multiple broadcasts from the same source; if a process is faulty and executes a broadcast, then it is not allowed to execute a subsequent broadcast when correct in the future in the MBRB specification (\textit{No duplication} property), while it is in MBBC. 
Finally, note that other primitives, such as consensus or register abstractions, are not useful in solving the MBRB/MBBC problems. Consider again the execution depicted in Figure \ref{fig0}, correct process may agree or may store a set of delivered messages (according to the MBRB/MBBC specifications) but a single process ($p_s$ in the example), in the settings we characterized, cannot infer if it has already delivered or not a message if it was previously compromised.

\section{A Protocol for MBBC in $\langle$\texttt{SYNC, S-MOB$^{+}$, $\mathcal{O}_{\mathit{FFA}}$}$\rangle$}
\label{sec:prot}
Theorem \ref{th:mbrbimp2} and Corollary \ref{cor:1} motivate the definition of a stronger local oracle than those considered in related work dealing with mobile Byzantine faults, $\mathcal{O}_{FFA}$: both MBRB and MBBC are impossible to solve in the ($\langle$\texttt{SYNC}, \texttt{S-MOB}$^{+}$, $\mathcal{O}_{\mathit{NFA/BFA}}\rangle$) settings.
Theorem \ref{th:mbrbimp1} states the impossibility in solving MBRB even in ($\langle$\texttt{SYNC}, \texttt{S-MOB}$^{+}$, $\mathcal{O}_{\mathit{FFA}}\rangle$). 
This Section investigates the remaining open problem-setting: the solvability of MBBC in ($\langle$\texttt{SYNC}, \texttt{S-MOB}$^{+}$, $\mathcal{O}_{\mathit{FFA}}\rangle$). 
Specifically, we start by defining $\mathcal{P}_{MBBC-RB}$, a protocol implementing the MBBC($\Delta_b, \Delta_c $) communication primitive.
Then, we prove its correctness and fault-tolerance optimality.

\subsection{$\mathcal{P}_{MBBC-RB}$: Protocol Description}
$\mathcal{P}_{MBBC-RB}$ is an extension of Bracha's algorithm \cite{DBLP:journals/iandc/Bracha87} aimed to solve the MBBC problem. 
It inherits Bracha's diffusion mechanism: a payload message $m$ is exchanged inside three protocol messages,  \textsf{SEND}, \textsf{ECHO}, and \textsf{READY}. 
The former is initially sent by the source process to all peers, and the latter are subsequently diffused by all correct processes to all peers if certain conditions are met, namely certain quorums are reached.\\
The pseudo-code of $\mathcal{P}_{MBBC-RB}$ is shown in Algorithm \ref{alg:mbrbrb1}. This solution overcomes the impossibility stated in Theorem \ref{th:mbrbimp2} by leveraging on $\mathcal{O}_{FFA}$ and by fixing the round index (i.e., the moment in time) where MBBC-deliveries must occur.
Every protocol's message contains the information about a specific MBBC-broadcast instance, specifically the source process label $s$, the message (payload) $m$, and the round counter $r_b$ when the broadcast instance started.
An MBBC-broadcast instance proceeds in four consecutive rounds in $\mathcal{P}_{MBBC-RB}$.
In the first round $r_b$, the protocol's message \textsf{SEND} is computed by $p_s$ and enqueued to \textsf{P2P}-send to all processes in the subsequent round.
Every process that \textsf{P2P}-receives a \textsf{SEND} message in round $r_{b+1}$ from $p_s$ computes the \textsf{ECHO} protocol's message for $\langle s, r_b, m \rangle$ and enqueues it to \textsf{P2P}-send to all peers. 
In round $r_{b+2}$, the processes that receive sufficiently many \textsf{ECHO} messages (more than $(n + f)/2$) for an MBBC-broadcast instance from distinct peers generate the related \textsf{READY} protocol's message to \textsf{P2P}-send to all processes. 
Finally, in round $r_{b+3}$, the processes that receive a sufficient number of \textsf{READY} messages (more than $2f$) for an MBBC-broadcast instance from distinct peers MBBC-deliver the associated message $m$ from $p_s$.
An additional protocol's message with respect to Bracha~\cite{DBLP:journals/iandc/Bracha87}, i.e. \textsf{ABORT}, is exchanged in $\mathcal{P}_{MBBC-RB}$ to guarantee the \textit{Agreement} property in case of a faulty source.
In $\mathcal{P}_{MBBC-RB}$, if a correct process $p_s$ executes \textsc{{MBBC.Broadcast($m$)}} in round $r_b$, then every process that is correct in round $r_{b+3}$ triggers \textsc{MBBC.Deliver}($s$,$m$) in the \textit{compute} phase of that round; every process that is faulty in round $r_{b+3}$ MBBC-delivers the message $m$ from $p_s$ at the first round $r_k > r_{b+3}$ it is correct.

We plug the fault-tolerant round counter defined by Bonnet at al. \cite{DBLP:journals/tcs/BonnetDNP16} inside the $\mathcal{P}_{MBBC-RB}$ protocol, enabling all correct processes to share the same value for the round index (that is assumed as an integer value). Its purpose is to fix the single round where the delivery of a certain message can take place. The round counter features are summarised in the following remark.

\begin{remark}[Round counter correctness \cite{DBLP:journals/tcs/BonnetDNP16}]
    \label{roundcountercorrect}
     In $\langle$\texttt{SYNC, S-MOB$^{+}$, $\mathcal{O}_{\mathit{BFA/FFA}}$}$\rangle$, if $n>3f$ then every correct process $p_i$ in round $r_j$ stores the same value for the round index (namely the variable $rc$ in Algorithm \ref{alg:mbrbrb1}) during compute phase.
\end{remark}




We stress the fact that protocol's messages in $\mathcal{P}_{MBBC-RB}$ (\textsf{SEND}, \textsf{ECHO}, \textsf{READY}, and \textsf{ABORT}) must be propagated in specific rounds with respect to the beginning of the MBBC-broadcast, in order to progress till the delivery of the associated message $m$.

{For ease of better understanding, we give a detailed description of $\mathcal{P}_{MBBC-RB}$ in Appendix \ref{sec:variables}, and we illustrate some examples of its execution in Appendix \ref{sec:executions} and within the proof of Lemma \ref{lm:sufficiency}.}

\begin{algorithm}[!h]
    \scriptsize
	\caption{$\mathcal{P}_{MBBC-RB}$}
	\begin{algorithmic}[1] 
	
    \Procedure{Init}{}
        \Let{To\_send}{$\emptyset$, Sends $\leftarrow \emptyset$, cured $\leftarrow$ \textsf{False}, rc $\leftarrow 1$} 
        
        \Let{Echos}{$\{\}$, Readys $\leftarrow \{\}$, Aborts $\leftarrow \{\}$} \Comment{map, $\langle s,r,m\rangle$ : set of process ids}
        \Let{RC}{$\{\}$} \Comment{map, process id : round value}
    \EndProcedure
    
	\Procedure{Broadcast}{m}
	        \Let{To\_send}{To\_send $\cup~\{\langle \textsf{SEND}, \text{s}, rc, m \rangle\}$} \label{algo:prepsend}
	\EndProcedure
    
	\Upon{\textsc{$\mathcal{O}_{\mathit{FFA}}$.cured}} \label{alg:fail1}
	    \Let{cured} \label{alg:fail2}{\textsf{True}} \label{algo:setcured}
	\EndUpon
	
	\Statex \textit{\textbf{Send Phase}}
	\If{cured} \label{algo:ifcured}
	    \Let{To\_send}{$\emptyset$} \label{algo:wipetosend}
	\EndIf
	\For{$\text{pk} \in \text{To\_send}$ } \label{algo:lineb1}
	    \For{$q \in \Pi$}
    	    \State ${\sf P2P.send}(q,\text{pk})$   \label{algo:lineb2}
	    \EndFor
	\EndFor
	
	\Statex {\textit{\textbf{Receive Phase}}}
    \Let{Sends}{$\emptyset$, Echos $\leftarrow \{\}$, Readys $\leftarrow \{\}$, Aborts $\leftarrow \{\}$, RC $\leftarrow \{\}$}
    \Upon{${\sf P2P.deliver}(\text{q}, \langle \textsf{Type}, \text{s}, r_b, m \rangle)$} \label{algo:linep2del}
        \If{$s = q$ \textbf{and} \textsf{Type} = \textsf{SEND}} \label{algo:linep2delsend1}
            \Let{Sends}{Sends $\cup ~\{\langle s,r_b, m \rangle\}$} \label{algo:linep2delsend2}
        \EndIf
        \If{\textsf{Type} = \textsf{ECHO}} \label{algo:linep2delecho1}
            \Let{Echos[$\langle s,r_b,m \rangle$]}{Echos[$\langle s,r_b,m \rangle$] $\cup ~\{q\}$} \label{algo:linep2delecho2}
        \EndIf
        \If{\textsf{Type} = \textsf{READY}}
            \Let{Readys[$\langle s,r_b, m \rangle$]}{Readys[$\langle s,r_b,m \rangle$] $\cup ~\{q\}$}
        \EndIf
        \If{\textsf{Type} = \textsf{ABORT}}
            \Let{Aborts[$\langle s,r_b,m \rangle$]}{Aborts[$\langle s,r_b,m \rangle$] $\cup ~\{q\}$}
        \EndIf
	\EndUpon
    \Upon{${\sf P2P.deliver}(\text{q}, \langle  \textsf{ROUND}, \text{j} \rangle)$}
        \Let{RC[$q$]}{$j$} \label{mbrbrblastline}
	\EndUpon
	
		\Statex \textit{\textbf{Compute Phase}}
    	\Let{To\_send}{$\emptyset$, rc $\leftarrow$ \textsc{getMajority}(RC.\textsc{values})}
    	\For{$\langle s,r_b,m \rangle \in \text{Sends}$} \label{algo:computesendbegin}
    	    \If{$\mathrm{rc} = r_{b+1}$} \label{algo:ifintegrity2}
    	        \Let{To\_send}{To\_send $\cup ~\{\langle \textsf{ECHO}, \text{s}, r_b, \text{m} \rangle\}$ } \label{algo:computesendend}
    	    \EndIf
    	\EndFor
    	\For{$\langle s,r_b,m \rangle \in \text{Echos}$}
    	\If{$|\mathrm{Echos}[\langle s,r_b,m \rangle]| > (n + f)/2$ } \label{algo:computeechobegin}
    	    \Let{To\_send}{To\_send $\cup ~\{\langle \textsf{READY}, \text{s}, r_b, \text{m} \rangle\}$ } \label{algo:computeechoend}
		\ElsIf{$|\mathrm{Echos}[\langle s,r_b,m \rangle]| > f$} \label{algo:elseif}
		    \Let{To\_send}{To\_send $\cup ~\{\langle \textsf{ABORT}, \text{s}, r_b, \text{m} \rangle\}$ }
        \EndIf
    	\EndFor
	   	\For{$\langle s,r_b,m \rangle \in \text{Aborts}$}
    	\If{$|\mathrm{Aborts}[\langle s,r_b,m \rangle]| > f$ } \label{algo:ifabort}
    	    \Let{$|\mathrm{Readys}[\langle s,r_b,m \rangle]$}{$\emptyset$}
        \EndIf
    	\EndFor
    	\For{$\langle s,r_b,m \rangle \in \text{Readys}$}
    	\If{$|\mathrm{Readys}[\langle s,r_b,m \rangle]| > 2f$} \label{algo:computedeliveryif1}
    	    \If{(($\mathrm{rc} = r_{b+3}$) \textbf{or} (cured \textbf{and} $\mathrm{rc} > r_{b+3}$ \textbf{and} \textsc{$\mathcal{O}_{\mathit{FFA}}$.faulty\_at} $\leq r_{b+3}$)) \Statex \hspace{30pt} \textbf{and} ($\nexists \langle s, r_k,m \rangle \in \mathrm{Readys} : (|\mathrm{Readys} \langle s, r_k,m \rangle| > 2f) \wedge (r_k < r_b)$)} \label{algo:computedeliveryif2}
    	        \State \textsc{Deliver}(s,m) \label{algo:computedelivery}
    	    \EndIf
    	    \Let{To\_send}{To\_send $\cup ~\{\langle \textsf{READY}, \text{s}, r_b, \text{m} \rangle\}$} \label{algo:continueready}
        \EndIf
    	\EndFor
    	\Let{cured}{\textsf{False}, rc $\leftarrow$ rc+1, To\_send $\leftarrow$ To\_send $\cup ~\{\langle \textsf{ROUND}, \text{rc} \rangle\}$}
		
	\end{algorithmic}
 	\label{alg:mbrbrb1}
\end{algorithm}

\subsection{Correctness Proofs}
We remark that in \texttt{S-MOB$^{+}$} mobile agents can move only between the \textit{compute} and \textit{send} phase of two consecutive rounds. This implies that $\Delta_s$ is assumed greater than or equal to one round. 
Such mobility model has the following effects to the agents' capabilities: at the beginning of a round $r_j$, mobile agents can potentially control the messages that are diffused by $2f$ processes, the ones where the mobile agents are placed in $r_j$ and the others where they were in the previous round $r_{j-1}$ (they can set in round $r_{j-1}$ the messages that will be exchange by freed processes in round $r_j$).
This capability can partially be mitigated by the local failure detector $\mathcal{O}_{\mathit{FFA}}$: a process can discard all messages queued to be send right after the failure detector notifies the \textsc{cured}$()$ event. It follows that, at the beginning of a round, at most $f$ processes may not participate in the protocol and at most $f$ may have a Byzantine behavior.

The following Lemmas and Theorem state the correctness of $\mathcal{P}_{MBBC-RB}$ in solving the MBBC problem and its fault-tolerance optimality with respect to the number of tolerated mobile agents.	

\begin{lemma}
    \label{lm:sufficiency}
    If $\Delta_b \geq 2$ rounds and $\Delta_c \geq 1$ round, then $\mathcal{P}_{MBBC-RB}$ solves the Mobile Byzantine Broadcast Channel problem (MBBC) in $\langle$\texttt{SYNC, S-MOB$^{+}$, $\mathcal{O}_{\mathit{FFA}}$}$\rangle$ if $n > 5f$.
\end{lemma}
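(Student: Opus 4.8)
The plan is to verify the four MBBC properties of Lemma~\ref{lm:sufficiency} one at a time, relying throughout on two facts already in hand: the round-counter correctness of Remark~\ref{roundcountercorrect} (valid since $n>5f>3f$, so every correct process holds the common value $rc=r_j$ in round $r_j$), and the mobility bookkeeping stated just before the lemma for $\langle$\texttt{SYNC},\texttt{S-MOB}$^{+}$,$\mathcal{O}_{\mathit{FFA}}\rangle$: in each round at most $f$ processes are Byzantine and at most $f$ freshly-cured processes are muted by the \textsc{cured}$()$-triggered wipe of \textsf{To\_send}, so at least $n-2f$ processes transmit their genuine protocol messages while at most $f$ forged messages of any single type can reach a correct receiver. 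I would first record the quorum arithmetic that $n>5f$ provides, namely $n-2f>(n+f)/2$ and $n-3f>2f$: a genuine \textsf{ECHO} quorum crosses the \textsf{READY} threshold, a genuine \textsf{READY} quorum crosses the delivery threshold $2f$, and $f$ forged messages cross neither.

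With this, $\Delta_b$-\emph{Integrity} and \emph{No-duplication} are the easy direction. For integrity I argue contrapositively: if $p_s$ is correct throughout $[t_0,t_k]$ yet never \textsc{Broadcast}s $m$, then by link authentication no correct process ever \textsf{P2P}-receives a \textsf{SEND} for $\langle s,\cdot,m\rangle$ from $s$, hence none emits an \textsf{ECHO} for it, so a correct process sees at most $f$ \textsf{ECHO}s and at most $f$ \textsf{READY}s for $m$, never reaching $(n+f)/2$ or $2f$ and never delivering; the remaining cases fall under the \quotes{$p_s$ faulty at some $t_i\le t_k$} disjunct. For no-duplication I use that the round counter pins the normal delivery of $\langle s,r_b,m\rangle$ to the single round $rc=r_{b+3}$; a process faulty through $r_{b+3}$ delivers it exactly once, at its first \textsc{cured}$()$ event with \textsc{faulty\_at}$\le r_{b+3}$ (a later faulty period has \textsc{faulty\_at}$>r_{b+3}$, so no re-delivery), and the guard $\nexists\,\langle s,r_k,m\rangle$ with $r_k<r_b$ and $|\mathrm{Readys}|>2f$ ensures that among instances carrying the same payload only the earliest is ever delivered.

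For $(\Delta_b,\Delta_c)$-\emph{Validity} I trace a correct broadcast forward. Since $\Delta_b\ge 2$, $p_s$ is correct in $r_b$ (enqueuing \textsf{SEND}) and remains correct, hence un-cured, in $r_{b+1}$, so the wipe does not erase \textsf{SEND} and it reaches all peers. Then at least $n-2f$ processes \textsf{ECHO} $m$ in $r_{b+2}$; since $n-2f>(n+f)/2$, every process correct in $r_{b+2}$ crosses the \textsf{READY} threshold, so none takes the \textsf{ABORT} branch and the $\le f$ forged \textsf{ABORT}s stay below the wipe threshold. Hence at least $n-2f>2f$ processes send \textsf{READY} in $r_{b+3}$, and every process correct in $r_{b+3}$ delivers $m$. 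Since both the set of processes correct in $r_{b+3}$ and the set of $\Delta_c$-infinitely-often-correct processes have size $\ge n-f$ and $2(n-f)>n$, their intersection is nonempty, exhibiting a $\Delta_c$-infinitely-often-correct process that delivers $m$ while correct.

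The delicate property, and the step I expect to be the main obstacle, is $\Delta_c$-\emph{Agreement} with a possibly faulty source. Assuming a correct $p$ delivers $\langle s,r_b,m\rangle$ at $r_{b+3}$, it received $>2f$ \textsf{READY}s and did \emph{not} wipe, so at most $f$ genuine \textsf{ABORT}s exist; because every correct process observes the same genuine \textsf{ECHO} messages, this forces at least $n-2f$ correct processes to have generated \textsf{READY} in $r_{b+2}$, whence at least $n-3f>2f$ genuine \textsf{READY}s circulate in $r_{b+3}$. As no genuine \textsf{ABORT} is ever produced from $r_{b+4}$ onward (the wipe can then never fire on $\le f$ forged \textsf{ABORT}s), these $>2f$ senders re-broadcast \textsf{READY} indefinitely, flooding every correct process with $>2f$ \textsf{READY}s in all later rounds, so every $\Delta_c$-infinitely-often-correct process delivers — at its own round $r_{b+3}$, or via the \textsc{faulty\_at}$\le r_{b+3}$ catch-up branch if it was faulty through $r_{b+3}$. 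The hard point I must nail down is that delivery at the \emph{fixed} round $r_{b+3}$ is genuinely all-or-nothing among processes correct there: I have to rule out a Byzantine-engineered split in which up to $f$ genuine \textsf{ABORT}s combine with forged \textsf{ABORT}s to push the per-process \textsf{ABORT} count above the wipe threshold at some continuously-correct processes but not others, so that some deliver at $r_{b+3}$ while others wipe and, never becoming faulty, never reach the catch-up branch. Showing that the \textsf{ECHO}/\textsf{ABORT} vote cannot be split this way under $n>5f$ — which is exactly where the $n-3f>2f$ margin is tight — is the crux of the whole correctness argument.
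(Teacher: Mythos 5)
Your treatment of Validity, No duplication, and $\Delta_b$-Integrity matches the paper's proof of Lemma~\ref{lm:sufficiency} essentially step for step: the same quorum arithmetic $n-2f>(n+f)/2$ and $n-3f>2f$, the same use of Remark~\ref{roundcountercorrect} to pin ordinary delivery to the single round $r_{b+3}$, and the same \textsc{faulty\_at}$\,\le r_{b+3}$ catch-up branch to establish No duplication. The genuine gap is precisely the step you flag and then leave open: the all-or-nothing behaviour of delivery under a faulty source, i.e.\ $\Delta_c$-Agreement. Ending with ``showing that the \textsf{ECHO}/\textsf{ABORT} vote cannot be split \dots is the crux'' does not prove the lemma; this is the only property a mobile adversary can actually attack, so without that step the proposal establishes nothing beyond the easy direction.

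For comparison, the paper closes this step by contradiction, counting from the vantage point of a process that \emph{does} deliver. If some but not all $\Delta_c$-infinitely-often correct processes deliver $\langle s,r_b,m\rangle$, then no round has more than $2f$ correct processes concurrently sending \textsf{READY}, so at most $2f$ correct processes delivered. Any such deliverer received at most $f$ \textsf{ABORT}s for the instance (otherwise line~\ref{algo:ifabort} would have emptied its \textsf{READY} set). Every correct process whose \textsf{ECHO} count exceeded $f$ but missed the $(n+f)/2$ quorum generates an \textsf{ABORT} that reaches the deliverer over the reliable links, so the deliverer's $\le f$ bound caps the number of correct processes that missed the quorum at $f$; hence $n-2f$ correct processes generated \textsf{READY}, at least $n-3f>2f$ of which disseminate it in round $r_{b+3}$, contradicting the assumption. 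The resolution is therefore not a direct argument that the \textsf{ABORT} tally cannot be split among receivers, but a bound on the number of \emph{genuine} \textsf{ABORT} generators obtained from the deliverer's own threshold, converted into a lower bound of more than $2f$ genuine \textsf{READY}s. Your specific worry --- that a handful of genuine \textsf{ABORT}s plus up to $f$ forged ones could push one continuously correct process over the wipe threshold while the deliverer stays under it --- is a real subtlety that the paper's counting argument addresses only implicitly; if you reconstruct the proof you should make that case explicit rather than leaving it as an announced obstacle.
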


\begin{proof}
For simplicity, we give the proof assuming the minimum values for $\Delta_b$ and $\Delta_c$. The arguments extend to higher values.

\underline{\textit{$\left(\Delta_b =2 \text{ rounds}, ~\Delta_c=1 \text{ round}\right)$-Validity}}: We prove that if we assume $\Delta_b=2$ rounds, $\Delta_c=1$ round, and a process $p_s$ is correct in round $r_b$ when it executes \textsc{MBBC.Broadcast}($m$), then every process that is $\Delta_c$-infinitely often correct eventually triggers \textsc{MBBC.Deliver}($s$,$m$), that implies the \textit{$\left(\Delta_b,\Delta_c\right)$-Validity} property.
    The MBBC-delivery of a message $m$ from a process $p_s$ may occur either because $p_s$ was correct in round $r_b$ and executed \textsc{MBBC.Broadcast}($m$) or since $p_s$ was faulty at some round $r_d< r_b$ and ${\sf P2P}$-sent a SEND message with payload $m$.
    Let us assume that process $p_s$ has not ${\sf P2P}$-sent yet the SEND message with payload $m$ neither as correct or faulty before round $r_b$, that it is correct in rounds $r_b$ and $r_{b+1}$ ($\Delta_b = 2$) and executes the procedure \textsf{Broadcast} with parameter $m$ in round $r_b$. 
    The $\langle \textsf{SEND}, \text{s}, r_b, \text{m} \rangle$ message is then prepared (line \ref{algo:prepsend}) to be relayed to all other processes (lines \ref{algo:lineb1}-\ref{algo:lineb2}).
    In round $r_{b+1}$, the $\langle \textsf{SEND}, \text{s}, r_b, \text{m} \rangle$ message  is \textsf{P2P}-sent by $p_s$ to all processes and it is received by all but $f$ (the ones controlled by mobile agents); it follows that $n-f$ processes executes lines \ref{algo:linep2del}-\ref{algo:linep2delsend2} during the \textit{receive} phase in round $r_{b+1}$ and lines \ref{algo:computesendbegin}-\ref{algo:computesendend} in the \textit{compute} phase, preparing the $\langle \textsf{ECHO}, \text{s}, r_b, \text{m} \rangle$ message to \textsf{P2P-send} in round $r_{b+2}$.
    In round $r_{b+2}$, at least $n-2f$ processes relay the message $\langle \textsf{ECHO}, \text{s}, r_b, \text{m} \rangle$ ($f$ process may be faulty in round $r_{b+2}$ and $f$ process may have been faulty in round $r_{b+1}$) and it is received by $n-f$ processes (again, the ones not controlled by mobile agents). These processes execute lines \ref{algo:linep2del}, \ref{algo:linep2delecho1} and \ref{algo:linep2delecho2} in the \textit{receive} phase and lines \ref{algo:computeechobegin} and \ref{algo:computeechoend} in the \textit{compute} phase. 
    In particular, the condition inside the \textit{if} statement at line \ref{algo:computeechobegin} is verified due to the assumption $n > 5f$, given that $n - 2f > (n+f)/2$, and line \ref{algo:computeechoend} is executed preparing $\langle \textsf{READY}, \text{s}, r_b, \text{m} \rangle$ message to ${\sf P2P}$-send in round $r_{b+3}$.
    Finally, in round $r_{b+3}$, the same reasoning given for round $r_{b+2}$ applies and $n-f$ processes execute lines \ref{algo:computedeliveryif1}-\ref{algo:computedelivery}, given $n - 2f > 2f$ and Remark \ref{roundcountercorrect}, and thus they trigger \textsf{Deliver} with parameters $s$ and $m$.
    At every round $r_j > r_{b+3}$ the $\langle \textsf{READY}, \text{s}, r_b, \text{m} \rangle$ message is ${\sf P2P}$-sent by all the correct processes not faulty in round $r_{j-1}$ (that are at least $n-f$). The \textit{if} statement at line \ref{algo:computedeliveryif2} guarantees that every process that was faulty in round $r_{b+3}$ delivers message $m$ from $p_s$ at the first round $r_k > r_{b+3}$ it is correct.
    Finally, in case \emph{(i)} process $p_s$ was faulty and ${\sf P2P}$-sent the SEND message with payload $m$ in round $r_k < r_b$, \emph{(ii)} every $\Delta_c$-infinitely correct process MBBC-delivered $m$ from $p_s$, and  \emph{(iii)} process $p_s$ is correct in round $r_b > r_k$ and executes \textsc{MBBC.Broadcast}($m$), then the claim still follows: the message $m$ has been already MBBC-delivered (further details can be found in the \textit{Agreement} property's proof).
    
    \underline{\textit{No duplication}}:
    The second sub-condition of the \textit{if} statement at line \ref{algo:computedeliveryif2} guarantees that the entire \textit{if} statement is verified only for the minimum $r_j$ among all the tuples $\langle s, *, m \rangle$ (i.e. the MBBC-delivery is independent from the $r_b$ parameter).
    The first sub-condition inside the \textit{if} statement at line \ref{algo:computedeliveryif2} is verified only once among all the rounds a mobile agent does not control the process.
    More in detail, if the $cured$ variable is $\textsc{False}$, the condition is verified only in round $r_{b+3}$ for the tuple $\langle s, r_b, m \rangle$.
    Otherwise, the \textit{if} statement in line \ref{algo:computedeliveryif2} is verified in round $r_k > r_{b+3}$ when a mobile agent, arrived on the process in round $r_j \leq r_{b+3}$, leaves the process, that occurs only once on a process during the entire lifetime of the system given Remark \ref{roundcountercorrect}.
    The condition $rc>r_{b+3}$ in line \ref{algo:computedeliveryif2} is not required but simplifies this proof.  

    \underline{\textit{$\left(\Delta_b=2\right)$-Integrity}}:  
    For the sake of contradiction, let us assume that a process $p_i$ is correct in round $r_k$ and executes  \textsc{MBBC.Deliver}($s$,$m$), that process $p_s$ is correct in rounds $r_b$ and $r_{b+1}$ (that is, $\Delta_b = 2$), and that it does not execute \textsc{MBBC.Broadcast}($m$) in round $r_b$.\\
    Process $p_i$ MBBC-delivers $m$ from $p_s$ either in round $r_k = r_{b+3}$ if $p_i$ is correct, or at the first round $r_k > r_{b+3}$ when $p_i$ is correct.
    In the former case, more than $2f$ processes sent message $\langle \textsf{READY}, \text{s}, r_b, \text{m} \rangle$ in round $r_{b+3}$, therefore more than $(n+f)/2$ processes sent message $\langle \textsf{ECHO}, \text{s}, r_b, \text{m} \rangle$ in round $r_{b+2}$, that implies that at least $(n+f)/2 -f$ processes were correct in round $r_{b+1}$ and received $\langle \textsf{SEND}, \text{s}, r_b, \text{m} \rangle$ in round $r_{b+1}$ from $p_s$ (lines \ref{algo:ifintegrity2}-\ref{algo:computesendend}).
    No procedure in $\mathcal{P}_{MBBC-RB}$ allows a correct process $p_s$ to P2P-send $\langle \textsf{SEND}, \text{s}, r_b, \text{m} \rangle$ messages except \textsc{Broadcast}($m$). 
    It follows that the latter scenario occurred and process $p_i$ was faulty in round $r_{b+3}$. As a matter of fact, correct process $p_i$ P2P-received more than $2f$ $\langle \textsf{READY}, \text{s}, r_b, \text{m} \rangle$ messages from distinct processes in round $r_{k}$. 
    For the same reasoning as in the former case, this implies that a correct process $p_s$ sent $\langle \textsf{SEND}, \text{s}, r_b, \text{m} \rangle$ messages but no procedure except \textsc{Broadcast}($m$) allows it.
    This leads to a contradiction and the claim follows.
    
    \underline{\textit{$\left(\Delta_c=1\right)$-Agreement}}:
    We proved, in the \textit{Validity} proof, that this property is satisfied in the case of a correct source.
    Faulty processes cannot collude to make one of the \textit{if} statements at lines \ref{algo:computeechobegin}, \ref{algo:elseif}, \ref{algo:ifabort} and \ref{algo:computedeliveryif1} verified for a message $m$ never sent over the ${\sf P2P}$ links of a process $p_s$. More in detail, the attacker cannot attempt to make any correct process MBBC-deliver a message $m$ from $p_s$ without compromising $p_s$.
    We prove that if $p_s$ is faulty and \textsf{P2P}-sends $\langle \textsf{SEND}, \text{s}, r_b, \text{m} \rangle$ messages in round $r_b$, then either all $\Delta_c$-infinitely correct processes delivers $m$ from $p_s$ or no $\Delta_c$-infinitely correct processes delivers $m$ from $p_s$.
    For the sake of contradiction, let us assume that all $\Delta_c$-infinitely often correct processes but some, $p_1, p_2, \dots, p_i$, MBBC-delivered a message $m$ from $p_s$. It follows that there is no round $r_j$ where more than $2f$ correct processes concurrently ${\sf P2P}$-send $\langle \textsf{READY}, \text{s}, r_b, \text{m} \rangle$.
    This implies that the correct processes that delivered $m$ are at most $2f$. According to the protocol, such processes receive a quorum of \textsf{ECHO} messages and at most $f$ \textsf{ABORT} messages about $m$, to generate the required \textsf{READY} messages. More in detail, they received \textsf{ECHO} messages from at least $2f+1$ correct processes. At that point, the faulty processes decided which correct processes reached the quorum of \textsf{ECHO} messages. Nevertheless, each correct process that did not reach the quorum generated an \textsf{ABORT} message. It follows that at most $f$ correct processes did not reach the quorum, whereas $n - f - f$ processes were correct and generated the \textsf{READY} message, which was disseminated by at least $n - 3f$ of them in the subsequent round. Given that $n > 5f$, at least $2f+1$ correct processes concurrently disseminate a \textsf{READY} message and thus all correct processes in round $r_{b+3}$ must MBBC-deliver it. This lead to a contradiction and the claim follows.
\end{proof}

\begin{lemma}
    \label{lm:necessity}
    The Mobile Byzantine Broadcast Channel problem (MBBC) is solvable in $\langle$\texttt{SYNC, S-MOB$^{+}$, $\mathcal{O}_{\mathit{FFA}}$}$\rangle$ only if $n > 5f$.
\end{lemma}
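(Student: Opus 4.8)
The plan is to prove the contrapositive: if $n \le 5f$ then no protocol $\mathcal{P}$ solves MBBC in $\langle \texttt{SYNC}, \texttt{S-MOB}^{+}, \mathcal{O}_{\mathit{FFA}}\rangle$. As in the impossibilities of Section~\ref{sec:imp}, the engine is an \emph{indistinguishability} (two-worlds) argument: I would exhibit two legal executions of $\mathcal{P}$ that a fixed permanently-correct process $p_e$ cannot tell apart, yet in which the MBBC specification forces $p_e$ into incompatible actions. The threshold $n>5f$ should emerge from the \emph{effective per-round fault profile} of \texttt{S-MOB}$^{+}$ already isolated before Lemma~\ref{lm:sufficiency}: at the start of any round the adversary dictates the protocol messages diffused by up to $2f$ processes --- the $\le f$ it currently occupies (arbitrary behavior) and the $\le f$ it occupied in the previous round (which, once freed and having wiped \textsf{To\_send} via $\mathcal{O}_{\mathit{FFA}}$, contribute nothing useful that round and thus act as send-omitting). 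Each round therefore combines $f$ Byzantine and $f$ omission faults, and $n>5f=3f+2f$ is exactly the classical resilience threshold $n>3b+2c$ for reliable broadcast with $b$ Byzantine and $c$ omission failures, instantiated at $b=c=f$.

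First I would anchor one world on \emph{Validity} and \emph{Agreement}. In World~$V$ the source $p_s$ is correct for $\Delta_b\ge 2$ rounds from $r_b$ and MBBC-broadcasts a single $m$; by $(\Delta_b,\Delta_c)$-Validity some $\Delta_c$-infinitely-often-correct process delivers $m$, and then by $\Delta_c$-Agreement \emph{every} such process --- in particular a target $p_e$ that is correct forever --- must eventually deliver $m$. By Remark~\ref{roundcountercorrect} the shared round counter pins the earliest possible delivery to a specific round $r^{*}$, so $p_e$'s decision is a deterministic function of its local history up to $r^{*}$: the multiset of \textsf{P2P}-delivered messages and the $\mathcal{O}_{\mathit{FFA}}$ events it observed.

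Next I would build the twin World~$B$, where $p_s$ is \emph{permanently correct} and broadcasts only $m'\neq m$, never $m$; here $\Delta_b$-Integrity forbids any correct process from ever delivering $m$. The heart of the construction is a World~$A$ that is (i) a legal \texttt{S-MOB}$^{+}$ execution respecting the budget of $f$ agents and $\Delta_s\ge 1$ round, (ii) indistinguishable from World~$B$ to $p_e$, and yet (iii) makes some correct process $p_d$ deliver $m$, so that $\Delta_c$-Agreement obliges $p_e$ to deliver $m$. In World~$A$ the source is \emph{faulty}: towards the honest region seen by $p_e$ it impersonates an honest source broadcasting $m'$ (matching $p_e$'s view message-for-message), while towards a small block $D$ of processes it injects $\langle\textsf{SEND},s,\cdot,m\rangle$ and uses its agents, round by round, to fabricate a full chain of $m$-evidence inside $D$ and to corrupt the carriers of that evidence before it leaks to the honest region. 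Since $p_s$ was faulty, $p_d$'s delivery of $m$ is Integrity-legal in~$A$; since $p_e$'s local history coincides with World~$B$, $p_e$ must act identically in $A$ and $B$ --- delivering $m$ in $A$ (Agreement) and never in $B$ (Integrity), the contradiction.

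The main obstacle, and the step where the constant $5f$ is pinned down, is showing World~$A$ is realizable \emph{precisely} when $n\le 5f$. The adversary must simultaneously simulate a self-contained $m$-delivery for $p_d$ using only the $\le 2f$ processes it controls or silences per round together with $D$, and keep the $\ge n-2f$ genuinely-correct participating processes consistent with ``$m$ was never spread'' so that $p_e$'s view matches World~$B$. A counting argument should show these are jointly satisfiable exactly when the honest region cannot certify $m$ by itself, i.e.\ when $n-2f\le (n+f)/2$, equivalently $n\le 5f$; when $n>5f$ any delivery of $m$ by a correct process necessarily rests on genuine support from more than $(n+f)/2$ correct processes that $p_e$ would also observe, breaking the indistinguishability and blocking the attack --- the mirror of the inequality $n-2f>(n+f)/2$ used in the sufficiency direction of Lemma~\ref{lm:sufficiency}. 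Establishing this tight equivalence, and scheduling the agents across rounds so both executions stay legal and indistinguishable to $p_e$ while the same physical nodes play conflicting roles in different rounds, is the delicate bookkeeping at the core of the proof.
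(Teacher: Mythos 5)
Your opening paragraph captures exactly the reduction the paper uses: under \texttt{S-MOB}$^{+}$ the adversary can alternate its $f$ agents between two disjoint sets so that every round presents $f$ Byzantine-like (spurious-sending) processes and $f$ omission-like (silent) processes, and $n>5f$ is the classical static threshold $n>3b+2c$ instantiated at $b=c=f$. The paper stops there: having exhibited this simulation of a static mixed fault profile, it simply invokes the known lower bounds of Backes and Cachin (and Raynal) for reliable broadcast with $f$ Byzantine plus $t$ fail-stop processes. You instead set out to re-derive that static bound from scratch via a three-world indistinguishability construction, which is more than the paper attempts, and it is in that extra work that your sketch develops a genuine gap.

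The gap is in the step that ``pins down'' the constant: you obtain $n\le 5f$ from the inequality $n-2f\le (n+f)/2$, i.e., from the honest region's inability to reach the quorum $(n+f)/2$. But $(n+f)/2$ is a design parameter of the particular protocol $\mathcal{P}_{MBBC-RB}$ of Lemma~\ref{lm:sufficiency}; a necessity proof must defeat \emph{every} protocol, including ones that certify messages without any quorum of that size. As written, your counting shows only that the specific sufficiency protocol becomes attackable when $n\le 5f$, which mirrors but does not establish the lower bound. Closing this would require a protocol-independent partition/scenario argument in the style of the classical $n>3b+2c$ proofs (partition $\Pi$ into blocks of size at most $f$ and permute the Byzantine, silent, and honest roles across indistinguishable scenarios), together with the agent-scheduling bookkeeping you explicitly leave undone. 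Alternatively --- and this is the paper's shorter route, for which your first paragraph already contains all the needed ingredients --- one can cite the static impossibility results directly once the per-round simulation of $f$ Byzantine plus $f$ omitting processes is in place.
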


\begin{proof}
    The claim follows by extending the results proven by Backes and Cachin~\cite{DBLP:conf/dsn/BackesC03} and by Raynal~\cite{DBLP:books/sp/Raynal18}. The former states that the BRB problem can be solved in a static distributed system where at most $t$ processes may fail-stop, and at most $f$ processes are Byzantine, if and only if $n > 3f + 2t$. Similarly, Raynal proved that the BRB problem can be solved in a static distributed system, where $t_l$ processes may not send messages, and $t_s$ processes may send spurious messages (processes may exhibit both behaviors during the lifetime of the system), if and only if $n > 2t_l + t_s$.
    
    Both scenarios can be simulated by an attacker in our system: the mobile agents can continuously alternate between two disjoint sets $P_1$ and $P_2$ of $f$ processes, namely it can turn faulty all processes in $P_1$ in all rounds $r_j, j \in \mathbb{N}$, and all processes in $P_2$ in all rounds $r_{j+1}$, sending spurious messages from process in $P_1$ and no message from peers in $P_2$. Therefore, all processes in $P_1$ send spurious messages (behaving like $f$ Byzantine faulty processes), and all the processes in $P_2$ send no message (like $f$ fail-stop faulty processes), and the claim follows. 
\end{proof}

\begin{theorem}
    \label{th:main}
    The Mobile Byzantine Broadcast Channel problem (MBBC) is solvable in $\langle$\texttt{SYNC, S-MOB$^{+}$, $\mathcal{O}_{\mathit{FFA}}$}$\rangle$ with $\mathcal{O}_{\mathit{FFA}}$ if and only if $n > 5f$.
\end{theorem}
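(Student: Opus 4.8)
The plan is to read Theorem \ref{th:main} as the conjunction of the two directions of a biconditional and to discharge each direction with one of the two preceding lemmas, so that the proof is essentially a gluing argument rather than a new construction. I would state explicitly that \quotes{solvable} means \emph{there exists} a protocol together with \emph{some} achievable parameters $\Delta_b,\Delta_c$ satisfying the MBBC specification, and that the two lemmas are tailored to meet exactly at $n>5f$.

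First I would handle the sufficiency (\quotes{if}) direction. Assuming $n>5f$, I must exhibit \emph{one} protocol and \emph{one} pair of parameters that realizes MBBC in $\langle$\texttt{SYNC, S-MOB$^{+}$, $\mathcal{O}_{\mathit{FFA}}$}$\rangle$. This is precisely what Lemma \ref{lm:sufficiency} supplies: the explicit protocol $\mathcal{P}_{MBBC-RB}$ solves MBBC$(\Delta_b,\Delta_c)$ with the minimal parameters $\Delta_b=2$ rounds and $\Delta_c=1$ round whenever $n>5f$ (and the argument there extends to larger values). Since solvability requires only the existence of a single correct protocol, invoking Lemma \ref{lm:sufficiency} closes this direction immediately, with no further analysis of the quorum thresholds $(n+f)/2$ and $2f$ needed.

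Next I would handle the necessity (\quotes{only if}) direction through its contrapositive: if $n\le 5f$, then no protocol can solve MBBC in this setting. This is exactly Lemma \ref{lm:necessity}, which embeds the mobile adversary's power into a static fault scenario by alternating a spurious-sending set $P_1$ and a silent set $P_2$, each of size $f$; any MBBC solution would then yield a static reliable broadcast tolerating $f$ Byzantine and $f$ fail-stop faults, which by the Backes--Cachin bound $n>3f+2t$ (with $t=f$) is impossible unless $n>3f+2f=5f$. Combining the two directions gives the claimed equivalence.

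I do not expect a genuine mathematical obstacle here, since both halves are already established; the only point demanding care is the quantifier reading of \quotes{solvable}. The matching is clean because Lemma \ref{lm:sufficiency} witnesses solvability with the smallest meaningful $\Delta_b,\Delta_c$, whereas the impossibility of Lemma \ref{lm:necessity} holds \emph{for all} parameter values (indeed independently of the $\Delta$-parameters appearing in its construction), so the upper and lower bounds meet precisely at $n>5f$ and the theorem follows.
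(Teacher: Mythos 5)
Your proposal is correct and follows essentially the same route as the paper, whose proof of Theorem \ref{th:main} is exactly the two-line combination of Lemma \ref{lm:sufficiency} (sufficiency via the explicit protocol $\mathcal{P}_{MBBC-RB}$) and Lemma \ref{lm:necessity} (necessity via the reduction to the static Backes--Cachin bound). Your added care about the existential reading of \quotes{solvable} and the parameter choices $\Delta_b=2$, $\Delta_c=1$ is a reasonable elaboration but does not change the argument.
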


\begin{proof}
    It follows from Lemmas \ref{lm:sufficiency} and \ref{lm:necessity}.
\end{proof}

The following Corollary extends the optimality of $\mathcal{P}_{MBBC-RB}$ to the case of slower agents. In other words, even if the mobile agents are slower we are not able to tolerate more agents solving MBBC.

\begin{corollary}
    \label{cor:slower}
    The Mobile Byzantine Broadcast Channel problem (MBBC) is solvable in $\langle$\texttt{SYNC, S-MOB$^{+}$, $\mathcal{O}_{\mathit{FFA}}$}$\rangle$ if and only if $n > 5f$, for each $\Delta_s\geq 1$ round. Furthermore, the actual value of $\Delta_s$ can be unknown to the processes.
\end{corollary}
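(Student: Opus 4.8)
The plan is to establish the two directions of the ``if and only if'' separately, anchoring everything on Theorem~\ref{th:main} (the case where agents may move at every round boundary) and arguing that neither the protocol nor the lower bound is sensitive to the precise value of $\Delta_s$, as long as $\Delta_s \geq 1$ round.

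For the \emph{sufficiency} direction (the ``if''), I would first observe that $\mathcal{P}_{MBBC-RB}$ never reads the value of $\Delta_s$: it only uses the fault-tolerant round counter (Remark~\ref{roundcountercorrect}), the fixed offsets $r_{b+1}, r_{b+2}, r_{b+3}$, and the \textsc{cured}$()$/\textsc{faulty\_at}$()$ interface of $\mathcal{O}_{\mathit{FFA}}$. Hence the same code runs verbatim for every $\Delta_s$, which already settles the ``unknown $\Delta_s$'' clause. Correctness then follows from a containment argument: the constraint that every agent remain on a process for at least $\Delta_s$ rounds becomes strictly more restrictive as $\Delta_s$ grows, so the set of admissible adversarial behaviours under a value $\Delta_s = k \geq 1$ is contained in the set of admissible behaviours under $\Delta_s = 1$. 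Since Lemma~\ref{lm:sufficiency} proves that $\mathcal{P}_{MBBC-RB}$ tolerates the (most permissive) $\Delta_s = 1$ adversary whenever $n > 5f$, it \emph{a fortiori} tolerates every $\Delta_s \geq 1$ adversary, and the invariant used there --- at most $f$ processes omitting and at most $f$ behaving arbitrarily at the start of each round --- continues to hold.

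For the \emph{necessity} direction (the ``only if''), I would adapt the reduction of Lemma~\ref{lm:necessity} to the static bounds of Backes and Cachin ($n > 3f + 2t$) and Raynal ($n > 2t_l + t_s$). The goal is to exhibit, for each fixed $\Delta_s \geq 1$, an admissible schedule of the $f$ mobile agents that forces the system to emulate $t = f$ fail-stop and $f$ Byzantine faulty processes, yielding $n > 3f + 2f = 5f$. The schedule I would use places the $f$ agents so that, at each round where an agent leaves a process, the freed process both wipes its outgoing queue (the omission/fail-stop effect enabled by the \textsc{cured}$()$ event) and carries the messages the agent pre-loaded while still faulty (the spurious/Byzantine effect), exploiting the ``$2f$ controlled messages per round'' observation stated just before Lemma~\ref{lm:sufficiency}; the agents are then re-injected so that the emulated fail-stop set keeps omitting and the emulated Byzantine set keeps sending spurious messages. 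One then invokes the static impossibility on any execution in which this pattern recurs at the critical rounds ($r_{b+2}$ and $r_{b+3}$) of an MBBC instance, and since MBBC admits infinitely many broadcast instances, the adversary can always align a movement boundary with such a critical round.

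The main obstacle, and the point where the argument genuinely departs from Lemma~\ref{lm:necessity}, is exactly this last step once $\Delta_s \geq 2$. When $\Delta_s = 1$ the doubling effect refreshes the $f$-omission set \emph{every} round, so the static $f$-fail-stop $+$ $f$-Byzantine configuration is literally persistent; for $\Delta_s \geq 2$ an agent can vacate a process only once every $\Delta_s$ rounds, so I would need to show that \emph{per-instance}, rather than per-round, exposure still suffices --- namely, that for each MBBC instance the adversary can schedule one movement to coincide with that instance's critical rounds, recreating the same $2f$-controlled-message configuration the static bound exploits. Verifying that this weaker, intermittent exposure still produces two executions indistinguishable to a permanently correct receiver yet demanding contradictory deliver / do-not-deliver behaviour, all without ever violating the $\Delta_s$ residency constraint, is the technical heart of the proof; the remainder is bookkeeping on the round counter and the quorum thresholds.
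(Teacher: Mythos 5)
Your proposal follows essentially the same route as the paper: sufficiency via the observation that $\mathcal{P}_{MBBC-RB}$ never reads $\Delta_s$ and that a slower adversary is a special case of the $\Delta_s=1$ adversary handled by Lemma~\ref{lm:sufficiency}, and necessity via the observation that for any $\Delta_s\geq 1$ the agents can still time a single movement to coincide with the critical \textsf{SEND}/\textsf{ECHO}/\textsf{READY} rounds of a broadcast instance. Your write-up is in fact more explicit than the paper's own two-sentence argument about why the per-instance (rather than per-round) alignment suffices for the lower bound.
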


   \begin{proof}
    \let\clearpage\relax 
    %
%
    In the $\texttt{S-MOB}^{+}$ model, $\Delta_s$ is expressed as a (strictly positive) number of rounds.
    The claim follow from the fact that whatever number of rounds is specified by $\Delta_s$, all the mobile agents can move in one of the three protocol phases when the SEND, ECHO, or READY messages are exchanged for a broadcast instance.

    Furthermore, the actual value of $\Delta_s$ is irrelevant solving the MBBC problem in (\texttt{SYNC, S-MOB$^{+}$, $\mathcal{O}_{\mathit{FFA}}$}): mobile agent are constrained to move only between two consecutive rounds and the $\mathcal{P}_{MBBC-RB}$ protocol is correct assuming the minimum value for $\Delta_s$ in \texttt{S-MOB$^{+}$} (that is, one round).
    \end{proof}

Note that MBBC and MBBR specifications do not allow processes to be \textit{terminate}, namely to eventually stop propagating messages through the P2P primitive. Intuitively,  processes need to continuously relay the messages in order to enforce \textit{$\Delta_c$-Totality/Agreement} 
and thus allow every temporarily faulty process to eventually deliver a broadcast message. Furthermore, as argued in Section \ref{sec:imp}, processes are not able to infer if a specific process has delivered a message, and thus conclude if all processes delivered a message when correct.
Additional assumptions enabling termination can be considered, such as an upper-bound on the time a process becomes correct when faulty.

\section{MBBC with multiple deliveries}
\label{sec:weakprim}
The impossibilities identified in Section \ref{sec:imp} arise for the general specification we defined. In fact, alternative or weaker specifications could be implementable under weaker assumptions.
    More in detail, we proved that no protocol can solve the MBBC in $\langle$\texttt{SYNC, S-MOB$^{+}$, $\mathcal{O}_{\mathit{BFA}}$}$\rangle$. 
    We therefore investigate the possibility of a weaker primitive that can be realized when the stringent conditions identified in Theorem \ref{th:main} are not satisfied.

    We start by considering the case where no local failure detector is available, that is, the case of $\mathcal{O}_{\mathit{NFA}}$.
    The following Theorem show that a weaker MBBC primitive, where the \textit{No duplication} property is not satisfied, is realizable in $\langle$\texttt{SYNC, S-MOB$^{+}$, $\mathcal{O}_{\mathit{NFA}}$}$\rangle$.

    \begin{theorem}
        \label{th:tomoveout1}
        A weaker Mobile Byzantine Broadcast Channel primitive, not guaranteeing the \emph{No duplication} property, is realizable in $\langle$\texttt{SYNC, S-MOB$^{+}$, $\mathcal{O}_{\mathit{NFA}}$}$\rangle$ if $\Delta_b =2$ rounds, $\Delta_c = 1$ round, and $n > 6f$.
    \end{theorem}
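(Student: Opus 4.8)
The plan is to adapt the protocol $\mathcal{P}_{MBBC-RB}$ of Algorithm~\ref{alg:mbrbrb1} to the oracle-free setting by (i) deleting every clause that depends on $\mathcal{O}_{\mathit{FFA}}$ (the \texttt{cured} flag, the \textsc{faulty\_at} test, and the ``deliver exactly once'' machinery), and (ii) rescaling the quorum thresholds to account for the fact that, without any failure oracle, a just-freed process cannot purge its (adversarially set) \textsf{To\_send} and therefore diffuses spurious protocol messages during the first round after its release. Concretely, in \texttt{S-MOB$^{+}$} with $\mathcal{O}_{\mathit{NFA}}$, the processes that may diffuse arbitrary messages in a round $r$ are exactly those in $B(r)\cup B(r-1)$ (currently controlled, or released this round with a corrupted queue), hence at most $2f$, while the other $n-2f$ processes, being correct in both $r-1$ and $r$, diffuse genuine messages. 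The effective fault budget is thus $2f$ Byzantine-equivalent senders per round instead of the ``$f$ Byzantine $+$ $f$ silent'' budget available under $\mathcal{O}_{\mathit{FFA}}$, and this is the single reason the bound rises from $n>5f$ to $n>6f$. I would therefore set the \textsf{ECHO}-to-\textsf{READY} threshold to $(n+2f)/2$, keep a \textsf{READY} amplification threshold of $2f$, use a delivery threshold of $4f$, and let a correct process MBBC-deliver $(s,m)$ in the compute phase of \emph{every} round in which it collects more than $4f$ distinct \textsf{READY} messages for that instance, thereby dropping the \emph{No duplication} guarantee exactly as the weaker specification permits.

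With this protocol in place I would reproduce the four properties of Lemma~\ref{lm:sufficiency} with the count $f$ replaced by $2f$. For \emph{Validity} (with $\Delta_b=2$, $\Delta_c=1$) I trace a correct source's broadcast through the four rounds as in Lemma~\ref{lm:sufficiency}: at least $n-2f$ processes diffuse \textsf{ECHO}, then \textsf{READY}; since $n>6f$ yields $n-2f>(n+2f)/2$ and $n-2f>4f$, the quorums at the lines analogous to \ref{algo:computeechobegin} and \ref{algo:computedeliveryif1} are met, every process correct in $r_{b+3}$ delivers, and persistent re-diffusion of \textsf{READY} lets every $\Delta_c$-infinitely-often-correct process deliver in one of its correct rounds. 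For \emph{$\Delta_b$-Integrity}, more than $4f$ \textsf{READY} messages force more than $2f$ \emph{genuine} ones; since the first genuine \textsf{READY} of an instance cannot arise by amplification (spurious \textsf{READY}s number at most $2f$, which does not cross the amplification threshold), it must stem from an \textsf{ECHO} quorum exceeding $(n+2f)/2$, which contains more than $(n-2f)/2>2f$ genuine \textsf{ECHO}s, each witnessing a \textsf{SEND} genuinely issued by $p_s$ (hence a \textsc{Broadcast}, unless $p_s$ was faulty, which \emph{Integrity} allows). For \emph{$\Delta_c$-Agreement}, the Bracha amplification argument carries over with effective budget $2f$: any delivery exposes more than $2f$ genuine \textsf{READY}s, every correct process consequently crosses the amplification threshold $2f$ and re-emits \textsf{READY}, so genuine \textsf{READY}s grow to $n-2f>4f$ and every infinitely-often-correct process eventually delivers.

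I expect the main obstacle to be twofold. First, I must justify rigorously that the ``$2f$ Byzantine-equivalent'' abstraction is sound across all phases simultaneously: a process released in round $r$ both may inject spurious \textsf{ECHO}/\textsf{READY} that round \emph{and} re-synchronises (it still receives and recomputes correctly in $r$, emitting genuine messages from $r+1$ on), so the bad senders stay confined to $B(r)\cup B(r-1)$ and never inflate a quorum past what the thresholds tolerate. Second, Remark~\ref{roundcountercorrect} guarantees a common round index only under $\mathcal{O}_{\mathit{BFA/FFA}}$, so I cannot invoke it verbatim; I would instead re-establish round-counter agreement by a majority of received \textsf{ROUND} messages, which is sound whenever $n-2f>2f$, i.e. $n>4f$ (subsumed by $n>6f$), since a freed process resynchronises its index within the same round, or else sidestep it entirely by keying instances on a source-supplied sequence number and running the cascade in Bracha's round-agnostic style. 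The remaining work is bookkeeping: checking that permitting repeated deliveries (a process that delivered, was recontrolled, wiped, and released re-delivers because the \textsf{READY}s still circulate) violates only \emph{No duplication} and leaves \emph{Validity}, \emph{Integrity}, and \emph{Agreement} intact.
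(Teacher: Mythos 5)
Your proposal is correct and follows essentially the same route as the paper: the paper's proof simply takes $\mathcal{P}_{MBBC-RB}$, deletes the oracle-dependent lines (including the once-only delivery guard), substitutes $f$ by $\bar f=2f$ in all thresholds to absorb the up-to-$2f$ processes per round (currently controlled or just released with a corrupted \textsf{To\_send}) that may emit spurious messages, and reruns the argument of Lemma~\ref{lm:sufficiency}, which is exactly your ``$2f$ Byzantine-equivalent senders'' accounting yielding $n-2f>(n+2f)/2$, i.e.\ $n>6f$. Your treatment is in fact somewhat more careful than the paper's terse proof, notably in re-establishing round-counter agreement under $\mathcal{O}_{\mathit{NFA}}$ (which Remark~\ref{roundcountercorrect} only covers for $\mathcal{O}_{\mathit{BFA/FFA}}$) and in separating the \textsf{READY} amplification and delivery thresholds, but these are refinements of the same construction rather than a different argument.
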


   \begin{proof}
    \let\clearpage\relax 
%
        Let us consider the $\mathcal{P}_{MBBC-RB}$ protocol defined in Algorithm \ref{alg:mbrbrb1}. Let us ignore the lines that interacts with the local failure detector, namely \ref{alg:fail1}, \ref{alg:fail2} and \ref{algo:computedeliveryif2}.
        Let us substitute all the occurrences of parameter $f$ with $\bar{f} = 2f$ in Algorithm \ref{alg:mbrbrb1}.

        The difference with respect the setting considered in Lemma \ref{lm:sufficiency} is that processes are not aware of being compromised. In particular, they may diffuse messages with P2P-links previously generated by mobile agents. As a matter of fact, the protocol is restored right after the mobile agent left the process.

        The proof follows from the same reasoning stated in Lemma \ref{lm:sufficiency} except for \textit{No duplication} considering $\bar{f}$ instead of $f$ in Algorithm \ref{alg:mbrbrb1}.
    \end{proof}

    The following theorem show that having a slightly better oracle about failures, namely $\mathcal{O}_{\mathit{BFA}}$, permits to withstand more Byzantine agents, for the same weaker problem that does \emph{not} guarantees no duplication.
    
    \begin{theorem}
        \label{th:weakbfa}
        A weaker Mobile Byzantine Broadcast Channel primitive, not guaranteeing the \emph{No duplication} property, is realizable in $\langle$\texttt{SYNC, S-MOB$^{+}$, $\mathcal{O}_{\mathit{BFA}}$}$\rangle$ if $\Delta_b =2$ rounds, $\Delta_c = 1$ round, and $n > 5f$.
    \end{theorem}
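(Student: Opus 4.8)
The plan is to reuse the protocol $\mathcal{P}_{MBBC-RB}$ of Algorithm \ref{alg:mbrbrb1}, adapting it along the lines of Theorem \ref{th:tomoveout1} but exploiting the additional \textsc{cured}$()$ event that $\mathcal{O}_{\mathit{BFA}}$ supplies over $\mathcal{O}_{\mathit{NFA}}$. The crucial difference from the construction of Theorem \ref{th:tomoveout1} is that I would \emph{not} substitute $\bar{f}=2f$ for $f$: since $\mathcal{O}_{\mathit{BFA}}$ raises \textsc{cured}$()$, lines \ref{algo:ifcured}--\ref{algo:wipetosend} stay active and a just-freed process wipes its $\mathrm{To\_send}$ queue before the next send phase. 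This prevents a freed process from relaying the spurious messages a departing agent had enqueued in the previous round. Consequently, exactly as in the discussion preceding Lemma \ref{lm:sufficiency}, at the start of every round at most $f$ processes behave in a Byzantine manner (the currently controlled ones) and at most $f$ merely omit to participate (the just-cured ones), rather than the $2f$ actively-Byzantine processes one faces under $\mathcal{O}_{\mathit{NFA}}$. This \emph{$f$-Byzantine-plus-$f$-omission} pattern is precisely the one whose quorum arithmetic is settled in Lemma \ref{lm:sufficiency} and shown tight in Lemma \ref{lm:necessity}, which is what turns the threshold from $n>6f$ back down to $n>5f$.

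Second, because $\mathcal{O}_{\mathit{BFA}}$ exposes only \textsc{cured}$()$ and not \textsc{faulty\_at}$()$, I would weaken the delivery guard on line \ref{algo:computedeliveryif2}: drop the sub-clause testing \textsc{$\mathcal{O}_{\mathit{FFA}}$.faulty\_at}$\le r_{b+3}$ (unavailable here), keep the second sub-clause selecting the minimal $r_b$ among tuples $\langle s,\ast,m\rangle$ (which needs no oracle), and replace the first disjunct by the oracle-light condition ``$\mathrm{rc}=r_{b+3}$ \textbf{or} (\emph{cured} \textbf{and} $\mathrm{rc}>r_{b+3}$)''. In words, a process holding the READY quorum ($>2f$) delivers either in the canonical round $r_{b+3}$ or, if it was faulty then, at the first round in which it is correct and still sees the quorum, which the perpetual re-diffusion of READY on line \ref{algo:continueready} keeps alive indefinitely.

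Third, I would inherit the \emph{Validity}, \emph{Integrity}, and \emph{Agreement} arguments of Lemma \ref{lm:sufficiency} essentially verbatim: the quorum chain is untouched, since the ECHO threshold $>(n+f)/2$ is met because $n-2f>(n+f)/2\iff n>5f$, the READY threshold $>2f$ is met because $n-2f>2f$, and the perpetual READY re-diffusion still guarantees that every $\Delta_c$-infinitely-often-correct process eventually reaches the quorum and delivers. What breaks — and what I must explicitly concede, matching the theorem statement — is \emph{No duplication}: a process that delivered $\langle s,m\rangle$ while correct, was subsequently compromised and then cured, cannot tell from \textsc{cured}$()$ alone that it has already delivered, so the weakened guard fires a second time and it re-delivers $\langle s,m\rangle$ while correct. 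I expect this to be the crux of the proof: the main obstacle is not the quorum counting (imported unchanged from Lemma \ref{lm:sufficiency}), but showing that removing \textsc{faulty\_at}$()$ costs \emph{only} No duplication and nothing else, i.e. that the cured-only guard still lets every temporarily-faulty process deliver once it recovers (preserving Validity and Agreement) without ever compromising Integrity.
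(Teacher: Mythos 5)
Your proposal is correct and follows essentially the same route as the paper: keep $\mathcal{P}_{MBBC-RB}$ with the original thresholds, rely on the \textsc{cured}$()$-triggered wipe of $\mathrm{To\_send}$ to restore the ``at most $f$ Byzantine plus at most $f$ omitting'' pattern underlying the $n>5f$ quorum arithmetic of Lemma \ref{lm:sufficiency}, and concede \emph{No duplication} because \textsc{faulty\_at}$()$ is unavailable (the point made via Theorem \ref{th:mbrbimp2}). Your write-up is in fact more explicit than the paper's own argument, which simply states that the claim follows by combining Theorem \ref{th:mbrbimp2} and Lemma \ref{lm:sufficiency}.
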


   \begin{proof}
    \let\clearpage\relax 
        The $\mathcal{O}_{\mathit{BFA}}$ failure oracle enables a correct process just freed from a mobile agent to take corrective actions, specifically to discard all messages queued to be sent in the current round.
        As a matter of fact, the $\mathcal{O}_{\mathit{BFA}}$ oracle does not allow a process to know whether it was correct in a defined period in the past, therefore the same technicality employed in $\mathcal{P}_{MBBC-RB}$ and detailed in Lemma \ref{lm:sufficiency} in the \textit{No duplication} part cannot be adopted. 
        The claim follows combining the argumentation provided in Theorem \ref{th:mbrbimp2} and Lemma \ref{lm:sufficiency}.
    \end{proof}

Abandoning the \textit{No duplication} guarantee, the number of message delivered becomes unbounded: the following theorem shows that it is not possible to bound the number of duplicate messages that are delivered, even assuming an intermediate oracle, namely $\mathcal{O}_{\mathit{BFA}}$.

    \begin{theorem}
        \label{th:infdel}
        Given a constant $\bar{k} \in \mathbb{N}^{+}$, it is not possible to define a weaker Mobile Byzantine Broadcast Channel primitive, not guaranteeing the \emph{No duplication} property, in $\langle$\texttt{SYNC, S-MOB$^{+}$, $\mathcal{O}_{\mathit{BFA}}$}$\rangle$ where
        a message $m$ MBBC-Broadcast by a process $p_s$ is MBBC-Delivered by a process $p_i$ at most $\bar{k}$ time when correct.
    \end{theorem}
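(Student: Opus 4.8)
The plan is to derive a contradiction by iterating the indistinguishability construction of Theorem \ref{th:mbrbimp2}: since a process cured by $\mathcal{O}_{\mathit{BFA}}$ retains no record of its past deliveries, any recovery that is forced (by \emph{Validity} together with \emph{$\Delta_c$-Agreement}) to trigger a delivery can be replayed arbitrarily often, producing more than $\bar{k}$ deliveries while correct. I would assume, for contradiction, that some weaker MBBC primitive (satisfying \emph{Validity}, \emph{Integrity}, and \emph{$\Delta_c$-Agreement} but not \emph{No duplication}) guarantees that every message is MBBC-delivered by a correct process at most $\bar{k}$ times.

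First I would isolate a single forced delivery upon recovery. Fix a permanently correct source $p_s$ that MBBC-broadcasts $m$ in round $r_1$, and let every process other than $p_1$ be permanently correct, so that the system settles into a steady stream of protocol messages about $m$ that correct processes keep relaying (continuous relaying being mandated to enforce \emph{$\Delta_c$-Agreement}, as discussed after Corollary \ref{cor:slower}). Let $p_1$ be faulty from the start, wipe its state to a clean initial configuration in some round, and then become correct; the $\mathcal{O}_{\mathit{BFA}}$ oracle raises a single \textsc{cured}$()$ event and reveals nothing about the past. Since $p_1$ is thereafter infinitely often correct, \emph{Validity} and \emph{$\Delta_c$-Agreement} force it to execute \textsc{MBBC.Deliver}$(s,m)$ at some round while correct. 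I call $\sigma$ the corresponding local input sequence of $p_1$ over its correct period --- its clean starting state, the \textsc{cured}$()$ event, and the sequence of incoming messages --- which by determinism is exactly what drives this delivery.

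Next I would replay $\sigma$. The decisive observation is that, after a wipe, $p_1$'s entire recoverable view consists of the clean state plus the purely local \textsc{cured}$()$ notification; $\mathcal{O}_{\mathit{BFA}}$ discloses neither the arrival time of the agent nor any trace of prior deliveries (this is precisely what distinguishes it from the $\mathcal{O}_{\mathit{FFA}}$-based check exploited in $\mathcal{P}_{MBBC-RB}$). Hence $p_1$'s recovery response is a memoryless function of its local input: whenever that input equals $\sigma$, the protocol re-executes \textsc{MBBC.Deliver}$(s,m)$, with no means of ``remembering'' that it already delivered and skipping. The incoming component of $\sigma$ recurs verbatim because the permanently correct processes never observe $p_1$'s deliveries (the oracle is local) and thus keep relaying the same messages about $m$; during its faulty phases $p_1$ can emit the identical cover messages each time, so the other processes' behavior is unaffected. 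The adversary can therefore make $p_1$ faulty again, wipe it back to the clean state, and release it into the same steady stream, reproducing $\sigma$ and hence another delivery.

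Finally, repeating this faulty/wipe/cure cycle $\bar{k}+1$ times yields $\bar{k}+1$ disjoint correct periods of $p_1$, in each of which $p_1$ MBBC-delivers $m$ from $p_s$ while correct, so $m$ is delivered more than $\bar{k}$ times by a correct process --- contradicting the assumed bound; as $\bar{k}$ is arbitrary, the claim follows. The main obstacle is the rigorous justification that the situation $\sigma$ genuinely recurs, i.e.\ that the incoming message stream seen on each recovery is identical and that the wipe restores precisely the same local state, so that determinism truly forces the repeated delivery. Establishing this hinges on the locality of $\mathcal{O}_{\mathit{BFA}}$ (no information about when the agent arrived, and no visibility of others' deliveries) and on the fact that the mandatory continuous relaying keeps the environment in a fixed steady state across cycles.
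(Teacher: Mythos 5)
Your proposal is correct and follows essentially the same route as the paper: it iterates the faulty/wipe/cure construction of Theorem \ref{th:mbrbimp2}, observes that under $\mathcal{O}_{\mathit{BFA}}$ a cured process has no way to know whether it already delivered $m$, so \emph{Validity} forces a fresh delivery on each recovery, and repeats the cycle $\bar{k}+1$ times to exceed the bound. The only difference is presentational --- you phrase the key step as deterministic replay of a recurring local input $\sigma$, whereas the paper phrases it as indistinguishability of the repeated history pattern --- but these are the same argument.
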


   \begin{proof}
    \let\clearpage\relax 
%
        The proof follows by extending the argument provided in Theorem \ref{th:mbrbimp2}. In the defined local execution histories $\mathcal{H}'_1$ and $\mathcal{H}''_1$, it is not possible to define an MBBC primitive where both \emph{No duplication} and \emph{Validity} properties are satisfied for a message $m$ MBBC-Broadcast by a process $p_s$. 
        As a matter of fact, if the \emph{No duplication} has not to be satisfied, process $p_i$ can always deliver message $m$ after the \textsc{cured}$()$ event generated by $\mathcal{O}_{\mathit{BFA}}$.

        Let us extend the execution history $\mathcal{H}''_1$. At round $r_{\Delta_1+\Delta_2+1}$ process $p_i$ executes \textsc{{MBBC.Deliver($m$)}}. Subsequently, the pattern of $\mathcal{H}''_1$ repeats: process $p_i$ get faulty and subsequently correct. Process $p_i$, again, is not able to know whether message $m$ from $p_s$ has been previously MBBC-Delivered, thus it executes \textsc{{MBBC.Deliver($m$)}} to satisfy the \emph{Validity} property.

        It follows that process $p_i$ MBBC-Deliver message $m$ from $p_s$ every time that a mobile agent moves away from $p_i$ with the described procedure. Therefore, if the a mobile agent arrives and frees process $p_i$ $\bar{k}+1$ times after the MBBC-Broadcast, process $p_i$ MBBC-Deliver $\bar{k}+1$ times message $m$ from $p_s$. Alternatively, if process $p_i$ does not MBBC-Deliver $m$ from $p_s$ when it get correct, it may not satisfy the \emph{Validity} property, and the claim follows. 
    \end{proof}

    \begin{corollary}
        \label{cor:ktimesdeliver}
        Suppose a solution to a weaker Mobile Byzantine Broadcast Channel primitive, not guaranteeing the \emph{No duplication} property, in $\langle$\texttt{SYNC, S-MOB$^{+}$, $\mathcal{O}_{\mathit{BFA}}$}$\rangle$. If a process $p_i$ gets faulty and correct $k$ times after the MBBC-Broadcast of a message $m$ from $p_s$, then $p_i$ MBBC-Delivers $m$ from $p_s$ at least $k$ times.
    \end{corollary}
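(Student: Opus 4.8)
The plan is to iterate, $k$ times, the repeatable adversarial pattern already constructed in the proof of Theorem~\ref{th:infdel}, and to charge exactly one forced \textsc{MBBC.Deliver}($s,m$) to each faulty/correct cycle of $p_i$. The only ingredient I would reuse is the indistinguishability observation inherited from Theorem~\ref{th:mbrbimp2}: under $\mathcal{O}_{\mathit{BFA}}$ a freshly-cured process learns only \emph{that} it was faulty, never \emph{what} it did while faulty, so it cannot tell whether it has already delivered $m$ from $p_s$. Combined with a state-wipe performed by the adversary during each faulty period, the local view of $p_i$ at the start of every correct interval following the broadcast coincides with the view of a process that was faulty throughout the broadcast and is becoming correct for the very first time — exactly the situation $\mathcal{H}''_1$ of Theorem~\ref{th:mbrbimp2}, where \emph{Validity} (with \emph{Agreement}) mandates an eventual delivery while correct.

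First I would fix an execution in which $p_s$ is correct throughout and MBBC-broadcasts $m$, and in which the adversary drives $p_i$ through the stipulated $k$ faulty/correct transitions after that broadcast; I would index by $I_1,\dots,I_k$ the $k$ maximal correct intervals of $p_i$ that follow it. As in Theorem~\ref{th:infdel}, the adversary wipes $p_i$'s state at the end of each preceding faulty period, so at the start of each $I_j$ process $p_i$ holds its initial state and has just received a \textsc{cured}$()$ event, while the remaining correct processes keep relaying the evidence of $m$ (recall from the discussion after Theorem~\ref{th:main} that the primitive cannot \emph{terminate} the diffusion, so these witnesses never disappear). Hence the full incoming configuration of $p_i$ at the start of $I_j$ is identical for every $j$. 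For each $j$ I would then run the indistinguishability argument locally: the prefix of $p_i$'s view up to the start of $I_j$ is indistinguishable from that of a process faulty throughout the broadcast, for which Validity forces an eventual \textsc{MBBC.Deliver}($s,m$); since the weaker primitive drops \emph{No duplication}, nothing forbids delivering again, and were $p_i$ to refrain during a sufficiently long $I_j$ then Validity would be violated in the indistinguishable fresh execution. Making each $I_j$ long enough (which the adversary controls) therefore forces a delivery inside $I_j$, and since $I_1,\dots,I_k$ are pairwise disjoint these are at least $k$ distinct deliveries.

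The main obstacle is the same one already settled in Theorem~\ref{th:infdel}: checking that the adversary can sustain this pattern across all $k$ cycles within the fault budget — each cycle only ever places a single agent on $p_i$ for the length of one faulty period, so $f\ge 1$ suffices — and that $m$ remains a genuine correct-source broadcast so that Validity applies at every $I_j$. Both points are secured by the repeatable construction of Theorem~\ref{th:infdel}, so the corollary reduces to this counting of one delivery per cycle.
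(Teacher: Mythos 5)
Your proposal is correct and follows essentially the same route as the paper: the paper's proof likewise reduces the corollary to the argument of Theorem~\ref{th:infdel}, observing that under $\mathcal{O}_{\mathit{BFA}}$ a freshly cured $p_i$ cannot tell how many times it has been correct before, so refusing to deliver in any one correct interval risks violating \emph{Validity} in an indistinguishable execution, which forces one delivery per faulty/correct cycle and hence at least $k$ in total. Your version is merely more explicit about the interval decomposition $I_1,\dots,I_k$, the state-wipe, and the fault budget, all of which are implicit in the paper's one-paragraph proof.
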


   \begin{proof}
    \let\clearpage\relax 
%
        It follows from the same argument provided for Theorem \ref{th:infdel}. Every time a process $p_i$ is freed by a mobile agent after a MBBC-Broadcast, the process has to decide whether to MBBC-Deliver or not a message $m$ MBBC-Brodcast by a process $p_s$. As a matter of fact, process $p_i$ does not known how many times it has been correct in the past, it is only aware that it has been freed by a mobile agent. It follows that if $p_i$ decide to not MBBC-Deliver once a message $m$ from $p_s$ it may invalidates the \emph{Validity} property and the claim follows.
    \end{proof}

    \begin{theorem}
        \label{th:last}
        Suppose a solution to a weaker Mobile Byzantine Broadcast Channel primitive, not guaranteeing the \emph{No duplication} property, in $\langle$\texttt{SYNC, S-MOB$^{+}$, $\mathcal{O}_{\mathit{NFA}}$}$\rangle$. 
        If a process $p_s$ MBBC-Broadcast a message $m$, then every process $p_i$ must MBBC-Deliver $m$ from $p_s$ infinitely often.
    \end{theorem}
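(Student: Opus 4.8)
The plan is to adapt the two-history, indistinguishability argument of Theorem \ref{th:infdel} and Corollary \ref{cor:ktimesdeliver}, but to exploit the defining feature of the $\mathcal{O}_{\mathit{NFA}}$ setting: since no \textsc{cured}$()$ event is ever generated, a correct process cannot detect a transition out of a faulty period, and hence cannot tell a ``fresh'' recovered state from a state it reached by staying correct. First I would pin down the premises so that the specification itself forces a delivery: assume $p_s$ is correct for $\Delta_b$ rounds when it invokes \textsc{MBBC.Broadcast}($m$), so that \textit{Validity} yields at least one correct delivery of $m$, and therefore \textit{$\Delta_c$-Agreement} obliges every $\Delta_c$-infinitely often correct process to MBBC-deliver $m$ from $p_s$ at some time when correct.

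Next I would reduce the statement to the stronger claim that, fixing any execution $E_1$ of a purportedly correct protocol and any $\Delta_c$-infinitely often correct process $p_i$, \emph{for every} time $t$ at which $p_i$ is correct, $p_i$ must MBBC-deliver $m$ at some time $\geq t$; quantifying over all such $t$ then gives ``infinitely often''. To establish this, I would build from $E_1$ a companion execution $E_2$ by the following adversarial schedule: a single mobile agent occupies $p_i$ throughout a period ending just before $t$, during which it makes $p_i$ send to every peer exactly the messages $p_i$ sends in $E_1$; at $t$ the agent wipes $p_i$'s state, resets it to $p_i$'s $E_1$-state at $t$, and departs. This stays within budget (one agent, respecting $\Delta_s \ge 1$ round), leaves the view of every other process identical to $E_1$, and makes $p_i$'s local state at $t$ identical in the two executions.

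The key step is to argue that $E_1$ and $E_2$ are indistinguishable to $p_i$ from time $t$ onward, and to every other process at all times. For the other processes this holds because the faulty $p_i$ of $E_2$ emits precisely its $E_1$ messages, so all processes receive the same messages and evolve identically. For $p_i$ it holds because its state at $t$ agrees with $E_1$ and, crucially, under $\mathcal{O}_{\mathit{NFA}}$ no \textsc{cured}$()$ event fires to betray the preceding faulty period; this is exactly where the oracle assumption is essential, since under $\mathcal{O}_{\mathit{BFA}}$ such an event would occur in $E_2$ but not in $E_1$ and would break the argument. In $E_2$, however, $p_i$ is faulty on the whole interval before $t$, so it has \emph{zero} correct-time deliveries of $m$ before $t$; by \textit{Agreement} it must MBBC-deliver $m$ at some correct time, necessarily $\geq t$. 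By determinism of the protocol, identical local state together with identical received messages forces $p_i$ to take the same action in $E_1$, so $p_i$ MBBC-delivers $m$ at that same time $\geq t$ in $E_1$ as well.

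Since $t$ was an arbitrary correct time of $p_i$, the deliveries of $m$ by $p_i$ occur at unboundedly late times, i.e.\ infinitely often, which is the claim. The main obstacle I expect is making the wipe-and-reset construction airtight: one must argue that a correct protocol's delivery decision depends only on the current local state and the incoming messages, so that a state record of ``already delivered $m$''—which the adversary plants in $E_2$ by the reset to $E_1$'s state—cannot be trusted to suppress a delivery without violating \textit{Agreement} in $E_2$. Justifying this for the reset state rather than a pristine initial state is precisely what forces the duplicate deliveries, and it is the delicate point of the whole argument.
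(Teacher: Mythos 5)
Your proposal is correct and is essentially the paper's argument made explicit: the paper's proof is a two-sentence informal statement of the same idea, namely that under $\mathcal{O}_{\mathit{NFA}}$ a correct process at any round cannot exclude having been faulty (and hence having delivered nothing while correct) at every earlier round, so \textit{Validity}/\textit{Agreement} force a delivery at or after every correct time. Your explicit two-execution, wipe-and-reset indistinguishability construction is a rigorous elaboration of that same reasoning rather than a different route.
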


   \begin{proof}
        \let\clearpage\relax 
%
        A correct process $p_i$ at round $r_k$ is not aware of its failure state at all round $r_j, j \in \mathbb{N}, j < k$. It follows that, if $p_i$ does not MBBC-Deliver $m$ from $p_s$, then it may not satisfy the \emph{Validity} property. The argument hold for every round $r_h, h \in \mathbb{N}, h > k$ and the claim follows.
    \end{proof}

\section{Conclusion}
We provided a specification for the Byzantine Reliable Broadcast and Byzantine Broadcast Channel problems in distributed systems affected by mobile Byzantine faults. 
We identified some impossibilities; in particular, 
we showed that both speed constraints on the mobile agents and timing assumptions on the system evolution are required to solve the problems under investigation, and we proved that the Byzantine Reliable Broadcast cannot be solved even in one of the most constrained mobile Byzantine failure models presented so far.
The Byzantine Broadcast Channel problem proved to be solvable, assuming a stronger local failure detector than the ones previously considered in the literature.
{Lastly, we investigated a weaker Byzantine Broadcast Channel primitive, not guaranteeing the \emph{No duplication} property, in settings equivalent to the ones assumed in related works.}
Our results characterise the solvability of a fundamental problem in a general dynamic process failure model, and open the path for research on additional important tasks. 
In particular, to understand the gap that exists between the theoretical model (assumed in this and in related work \cite{banu2012improved,DBLP:journals/tcs/BonnetDNP16, DBLP:conf/wdag/Garay94,DBLP:conf/podc/OstrovskyY91, DBLP:conf/opodis/SasakiYKY13, DBLP:journals/iandc/Reischuk85}) and the practical world, investigating the feasibility of the oracles and defining solutions that are as practical as possible. 
Furthermore, it may be interesting to relax the assumptions of instantaneous fault detection and recovery (of the protocol), to investigate whether the assumption of digitally signed messages has an impact on the solvability of the considered problems, and to analyse the Mobile Byzantine Channel problem assuming the \texttt{S-MOB} agent mobility model (which we have left open for analysis and we conjecture its solvability).

\begin{appendices}

 \section{The Byzantine Reliable Broadcast and Channel Problems Specification \cite{DBLP:journals/iandc/Bracha87, DBLP:books/daglib/0025983}}
\label{sec:brb}
The Byzantine Reliable Broadcast and the Byzantine Broadcast Channel problems aim at specifying a communication primitive, respectively \textsc{BRB} and \textsc{BBC}, exposing two operations, \textsc{BRB/BBC-broadcast($m$)} and \textsc{BRB/BBC-deliver($s,m$)}, where $m$ is a message and $s$ is a process identifier.

The BRB primitive enables all correct processes of a distributed system to agree on a single message diffused by a (potentially faulty) particular process, the source. The BBC primitive extends BRB allowing all processes to diffuse an arbitrary number of messages so that all correct processes eventually deliver the same set of messages.
We say that a process $p_i$ \quotes{BRB/BBC-broadcasts a message $m$} when it invokes \textsc{BRB/BBC-broadcast($m$)}, and $p_i$ \quotes{BRB/BBC-delivers a message $m$ from $p_s$} when it manage the \textsc{BRB/BBC-deliver($s,m$)} event.

We remark that both BRB and BBC primitives assume a \textit{static process failure model} where every process is permanently correct or faulty.

\subsection{Byzantine Reliable Broadcast (BRB)}
The \textsc{BRB} communication primitive guarantees the following properties:
\begin{itemize}
    \item \textit{Validity}:  If a correct process $p_s$ BRB-broadcasts a message $m$, then every correct process eventually BRB-delivers $m$ from $p_s$.
    \item \textit{No duplication}: Every correct process BRB-delivers at most one message from $p_s$.
    \item \textit{Integrity}: If some correct process BRB-delivers a message $m$ from $p_s$ and process $p_s$ is correct, then $m$ was previously BRB-broadcast by $p_s$.
    \item \textit{Consistency}: If some correct process BRB-delivers a message $m$ from $p_s$ and another correct process BRB-delivers a message $m'$ from $p_s$, then $m = m'$.
    \item \textit{Totality}: If some message is BRB-delivered by any correct process, every correct process eventually BRB-delivers a message.
\end{itemize}

\subsection{Byzantine Broadcast Channel (BBC)}
\label{sec:bbc}
%
The \textsc{BBC} communication primitive guarantees the following properties:
\begin{itemize}
    \item \textit{Validity}:  If a correct process $p_s$ BBC-broadcasts a message $m$, then every correct process eventually BBB-delivers $m$ from $p_s$.
    \item \textit{No duplication}:
    No correct process BBC-delivers a message $m$ from $p_s$ more than once.
    \item \textit{Integrity}: If some correct process BBC-delivers a message $m$ from $p_s$ and process $p_s$ is correct, then $m$ was previously BBC-broadcast by $p_s$.
    \item \textit{Agreement}: If some correct process BBC-delivers a message $m$ from $p_s$ then every correct process eventually delivers message $m$ from $p_s$.
\end{itemize}

\section{$\mathcal{P}_{MBBC-RB}$ execution examples}
\label{sec:executions}
\let\clearpage\relax
    We detail in this Section several execution examples for the $\mathcal{P}_{MBBC-RB}$ protocol defined in Section \ref{sec:prot}.  Given what claimed in Theorem \ref{th:main}, we assume that the correctness conditions for our protocol, i.e. a $\langle$\texttt{SYNC, S-MOB$^{+}$, $\mathcal{O}_{\mathit{FFA}}$}$\rangle$ system and $n>5f$, are satisfied in all of the provided examples.
    We detail one example where the source is correct and two in which the source is faulty.

\begin{figure}[h]
\centering
\includegraphics[width=.8\textwidth]{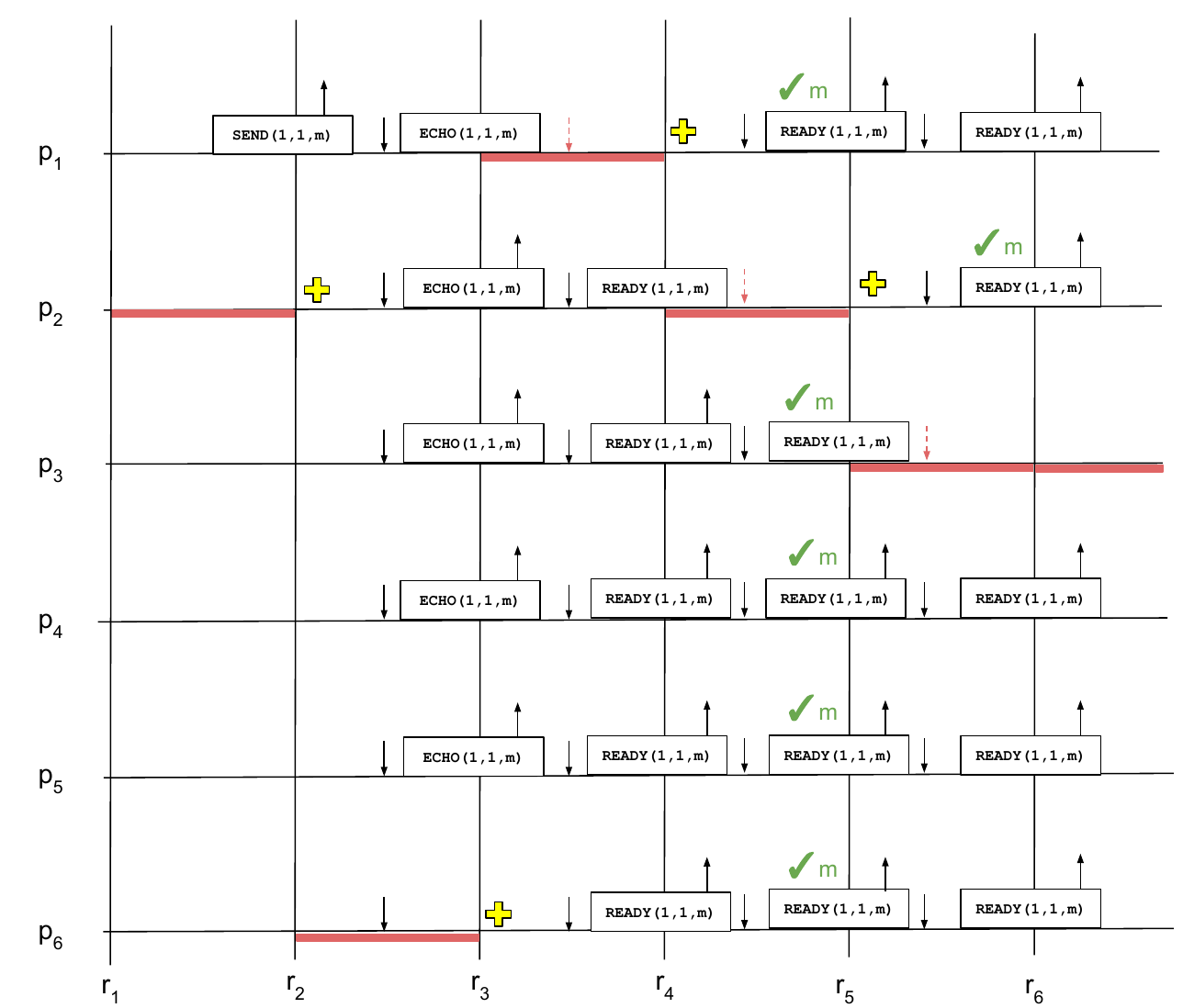}
\caption{An execution of $\mathcal{P}_{MBBC-RB}$ with a correct source and $f=1$.}
\label{fig:execcorrect}
\end{figure}

\begin{figure}
\centering
\includegraphics[width=.8\textwidth]{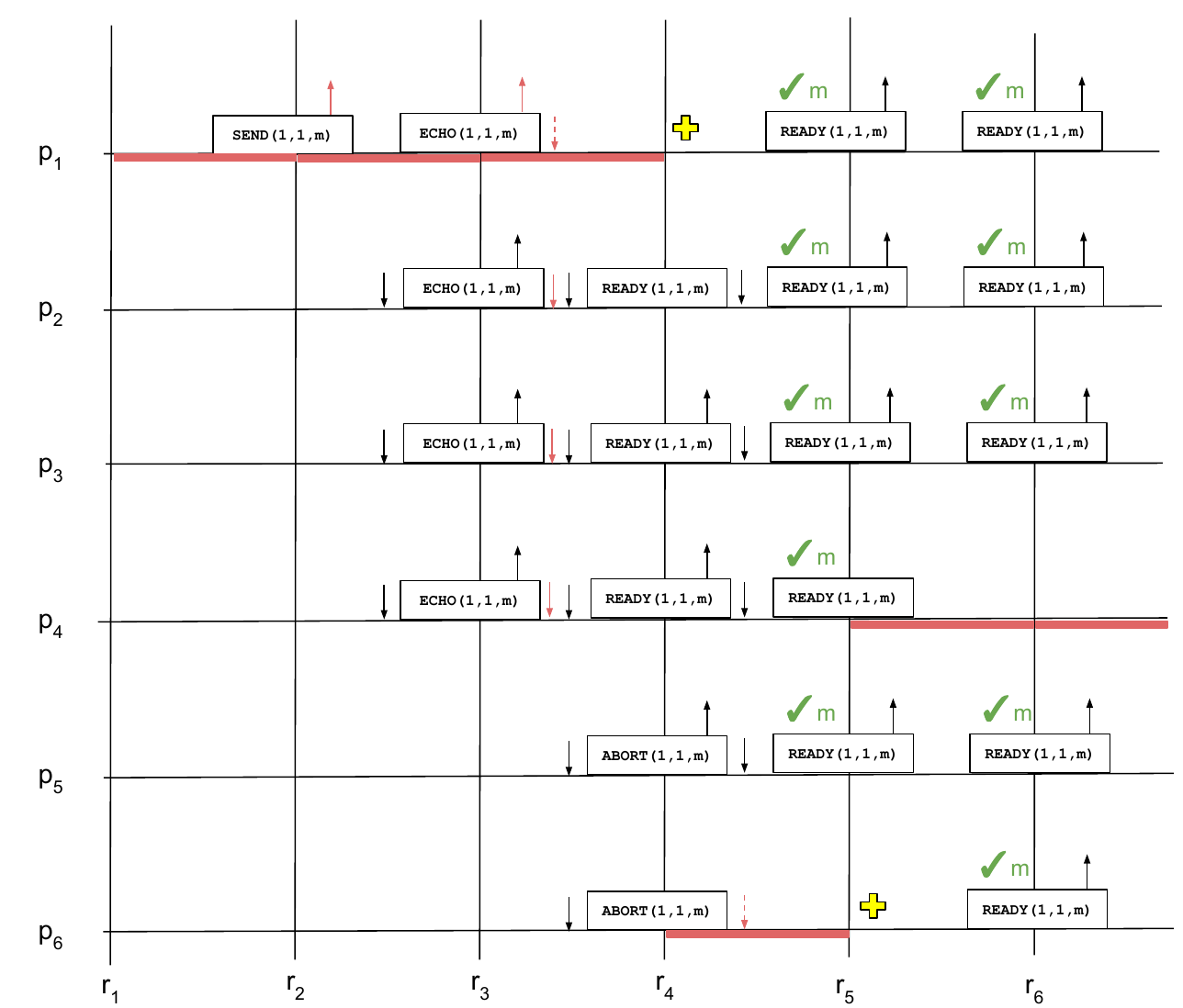}
\caption{An execution of $\mathcal{P}_{MBBC-RB}$ with a faulty source, $f=1$ and all infinitely often correct processes delivering.}
\label{fig:execbyz2}
\end{figure}

\begin{figure}
\centering
\includegraphics[width=.8\textwidth]{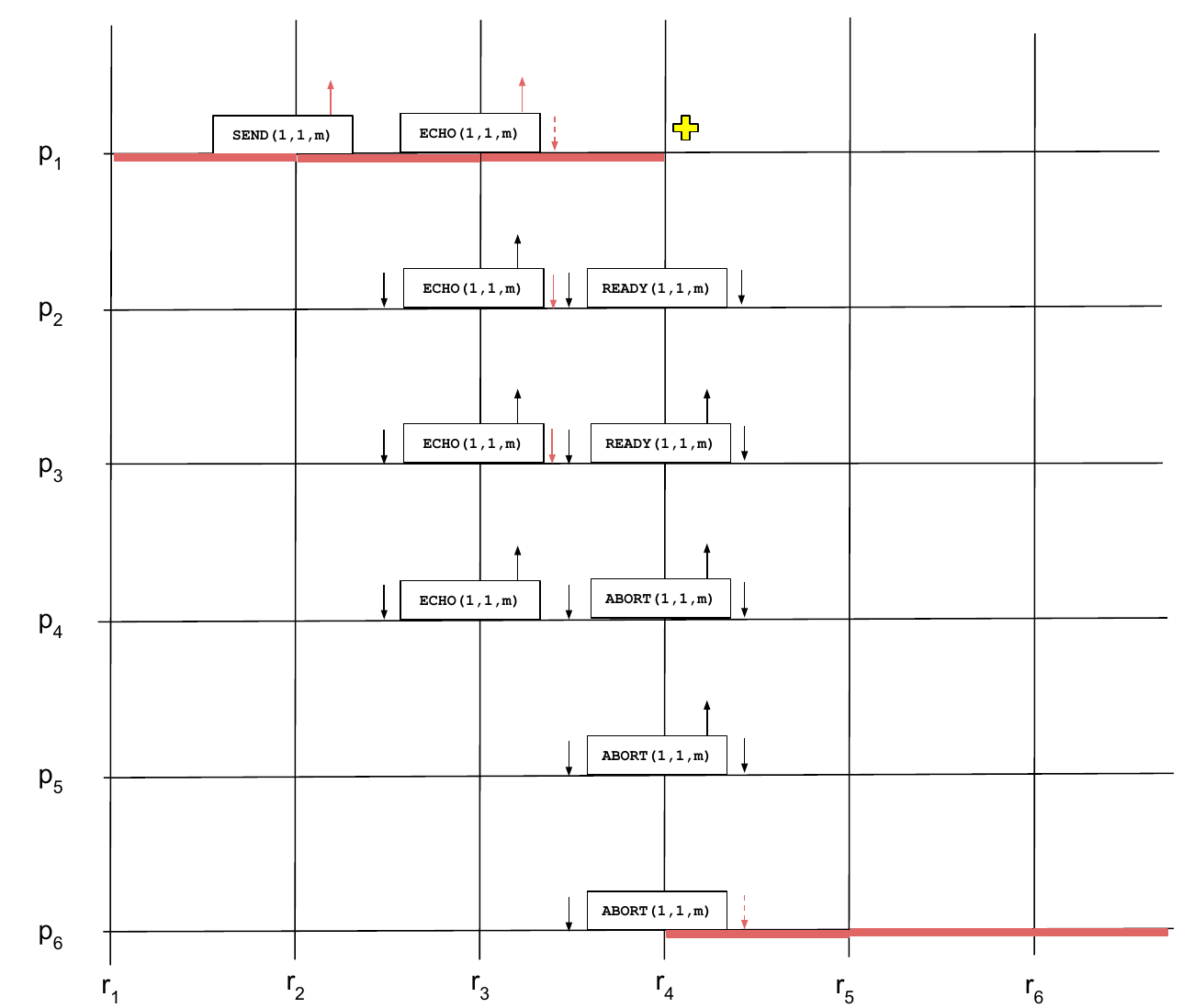}
\caption{An execution of $\mathcal{P}_{MBBC-RB}$ with a faulty source, $f=1$ and no infinitely often correct process delivering.}
\label{fig:execbyz1}
\end{figure}

    In the execution example in Figure \ref{fig:execcorrect}, the correct source $p_1$ starts the MBBC-Broadcast preparing the related \textsf{SEND} message in round $r_1$, that is \textsf{P2P}-sent to all processes in round $r_2$ ($\Delta_b=2$). Process $p_2$ is faulty in round $r_1$, then the mobile agent moves to process $p_6$ in round $r_2$.
All processes but $f$ are correct in round $r_2$, thus they receive the \textsf{SEND} message from $p_1$ and generate the related \textsf{ECHO} message. Such message is then \textsf{P2P}-sent to all peers by at least $n-2f$ processes during the \textit{send} phase in round $r_3$ (at most $f$ processes could have been faulty in round $r_2$, $p_6$ in our example, and at most $f$ processes could become faulty in round $r_3$, $p_1$ in our example where the mobile agent moves in round $r_3$). It follows that $n-f$ processes reach the quorum of \textsf{ECHO} messages generating the related \textsf{READY} message. Again, at least $n-2f$ processes are correct in round $r_4$, \textsf{P2P}-send the \textsf{READY} message and deliver the associated payload from $p_1$, $m$, during the \textit{compute} phase of the same round. 
The processes that were faulty in round $r_4$, $p_2$ in our example, deliver the message at the first round $r_k > r_4$ they get correct, because all processes that are correct in a round $r_j > r_4$ diffuse the associated \textsf{READY} message.
    
The only MBBC property that mobile agents may attempt to invalidate in a execution of $\mathcal{P}_{MBBC-RB}$ is the \textit{Agreement} property: the \textit{No duplication} is guaranteed by the \textit{if} statement at line \ref{algo:computedeliveryif2} in Algorithm \ref{alg:mbrbrb1} and both \textit{Validity} and \textit{Integrity} consider a correct source.
Any source must P2P-send a well-formed \textsf{SEND} message (i.e., with valid source id and round label) to make a correct process proceed in the protocol to deliver a payload $m$. If the \textsf{SEND} message is P2P-sent to all correct processes, then all $\Delta_c$-infinitely often correct processes will eventually deliver $m$, as shown in the previous execution, satisfying the MBBC specification.
It follows that a Byzantine source must not P2P-send the \textsf{SEND} message to some processes. 
This behavior has two possible outcomes in our protocol: either all correct processes MBBC-deliver the diffused message or no correct process does it.
Let us assume that the mobile agent commands $p_1$ to P2P-send the \textsf{SEND} message to $\lfloor (n-f)/2 \rfloor - f$ processes, in order to control which ones will proceed in the $\mathcal{P}_{MBBC-RB}$ protocol generating the \textsf{READY} message in round $r_4$.\\
In the execution depicted in Figure \ref{fig:execbyz2}, process $p_1$ is a faulty source that attempts to prevent the \textit{Agreement} property of MBBC from being satisfied. Specifically, it P2P-sends the \textsf{ECHO} message only to part of the processes, process $p_2$, $p_3$, and $p_4$, that reach the quorum required to generate the \textsf{READY} message. In this case, processes $p_5$ and $p_6$ generate the \textsf{ABORT} message but only $f$ of them, namely $p_5$, P2P-send it, thus blocking no correct process from proceeding in the MBBC-delivery of $m$ from $p_1$. Nonetheless, in this case more than $2f$ processes are correct and P2P-send the \textsf{READY} message in round $r_4$. It follows that all $\Delta_c$-infinitely often correct processes eventually deliver the associated payload $m$. \\
Differently from the previous example, in the execution in Figure \ref{fig:execbyz1} process $p_1$ sends the \textsf{ECHO} message to processes $p_2$ and $p_3$. It follows that all other correct processes, $p_4$, $p_5$, and $p_6$, generate the \textsf{ABORT} message. At most $f$ of them, process $p_6$ in the example, can be blocked from P2P-sending the \textsf{ABORT} message. It follows that more than $f$ processes diffuse to all correct ones the \textsf{ABORT} message and thus no process delivers the associated payload $m$. It follows that the specification is not violated in such execution.

\section{Additional details on $\mathcal{P}_{MBBC-RB}$ (Algorithm \ref{alg:mbrbrb1})}
\label{sec:variables}
We provide in this Appendix an additional detailed description of the $\mathcal{P}_{MBBC-RB}$ protocol and its variables defined in Algorithm \ref{alg:mbrbrb1} for the sake of completeness.
\begin{itemize}
\item The \textsf{Init} procedure initializes all the data structures and variables employed by the protocol. More in detail, it defines:
\begin{enumerate}
\item The \textit{To\_send} set variable to collect the messages (of any type) to P2P-send during the \textit{send} phase of a round;
\item The \textit{Sends} set to store the \textsf{SEND} messages received in a round;
\item The \textit{cured} boolean variable to keep track of the occurrence of the $\sf{O_{FFA}.cured}$ event;
\item The \textit{rc} integer variable to store the current round index;
\item The \textit{Echos}, \textit{Readys}, and \textit{Aborts} maps to collect, for every tuple $\langle s,m,r \rangle$ associated with single MBBC-instance, the identifier of the processes that P2P-send the \textsf{ECHO}, \textsf{READY}, and \textsf{ABORT} for such tuple respectively in the current round;
\item The \textit{RC} map that associates, for every process, the value of the round index that it P2P-sends in the current round. 
\end{enumerate}

\item The \textsf{Broadcast} procedure implements the \textsf{Broadcast} operation of \texttt{MBBC} by enqueueing the \textsf{SEND} message to P2P-send in set \textit{Sends}.

\item $\mathcal{P}_{MBBC-RB}$ is partitioned in three parts accordingly with the three phases assumed in the system model.

\item During the \textit{send} phase of a round, all the messages that have been enqueued to P2P-send in the previous round, stored in \textit{To\_send}, are discarded if the failure detector generated the \textsf{Cured} event in the current round, they are P2P-send to all processes otherwise.

\item The \textit{receive} phase of a round starts by wiping all maps data structures. The aim is to limit the capability of mobile agents to P2P-send spurious information only from $f$ processes in every round. Subsequently, all protocol's messages that have been P2P-received in the current round, \textsf{SEND}, \textsf{ECHO}, \textsf{READY}, and \textsf{ABORT}, are partitioned in the dedicated data structures.
The same occurs also for the messages exchanged to implement the fault-tolerant round counter: the P2P-received values are collected in the dedicated data structure.

\item The \textit{compute} phase analyzes all information received during the \textit{receive} phase and proceeds in the computation, MBBC-delivering messages and computing the protocol's messages to P2P-send in the subsequent round. 
It starts by wiping the data structure that collects the message to subsequently P2P-send and by updating the round index by majority (the $rc$ value may have been previously altered by an agent). 
Subsequently, if valid \texttt{SEND} message is received, then the related \texttt{ECHO} message is computed and enqueued to be P2P-send.
For all the \texttt{ECHO} messages received, if a process has received the one associated with a specific MBBC instance from a sufficient number of distinct processes (a quorum), then the \texttt{READY} message is computed and enqueued, if such a number is not sufficient but includes at least a currently correct process, then the \texttt{ABORT} message is generated.
If it is sure that the \texttt{ABORT} message associated with a MBBC instance has been sent by at least a correct process, then the received \texttt{READY} messages associated with the same instance are discarded in order to preserve the \textit{Agreement} property.
If a sufficient number of \texttt{READY} messages associated with a MBBC instance have been received and specific conditions are met, then the message is delivered.
These conditions are met at most once for every process, namely three rounds after the MBBC-Broadcast beginning or at the first subsequent round when a process get correct, for the minimum round label (making it irrelevant to the primitive).
Furthermore, if a sufficient amount of \texttt{READY} messages is received, then the same message is enqueued to P2P-send, in order to guarantee that processes that are faulty three rounds after the MBBC-broadcast will MBBC-deliver the associated payload when correct.
Finally, the round index is increased and its value is enqueued to P2P-send.
\end{itemize}

\end{appendices}

\bibliography{reference}

\end{document}